\newtheorem{assumption}{Assumption}
\newtheorem{lemma}{Lemma}
\newtheorem{theorem}{Theorem}
\newtheorem{observation}{Observation}
\DeclareMathOperator{\length}{length}
\newcommand{\ExternalLabeling}{\textsc{ExternalLabeling}\xspace}
\newcommand{\ContourLabeling}{\textsc{ContourLabeling}\xspace}
\newcommand{\IC}{\mathrm{C}}
\newcommand{\TB}{\mathrm{A}}
\newcommand{\TC}{\mathrm{B}}
\newcommand{\Sites}{\mathbb S}
\newcommand{\Contour}{\mathbb C}
\newcommand{\Figure}{\mathbb F}
\newcommand{\Ports}{\mathbb P}
\newcommand{\Instance}{\mathbb I}
\newcommand{\InstanceSet}{\textsc{Is}}
\newcommand{\OPTIONAL}[1]{}
\renewcommand{\restriction}{\raise-.2ex\hbox{\ensuremath|}}
\newcommand{\OPT}{\textsc{OPT}\xspace}
\newcommand{\TSCH}{\textsc{TSCH}\xspace}
\newcommand{\SCH}{\textsc{SCH}\xspace}
\newcommand{\CH}{\textsc{CH}\xspace}
\title{Radial Contour Labeling with Straight Leaders}
\author{Benjamin Niedermann\thanks{e-mail:niedermann@kit.edu}\\ %
        \scriptsize Karlsruhe Institute of Technology %
\and Martin Nöllenburg\thanks{e-mail:noellenburg@ac.tuwien.ac.at}\\ %
     \scriptsize  TU Wien %
\and Ignaz Rutter\thanks{e-mail:i.rutter@tue.nl}\\ %
   \scriptsize TU Eindhoven %
}
\date{}
\begin{document}

\maketitle

\begin{abstract}
  The usefulness of technical drawings as well as scientific
  illustrations such as medical drawings of human anatomy essentially
  depends on the placement of labels that describe all relevant parts
  of the figure. In order to not spoil or clutter the figure with
  text, the labels are often placed around the figure and are
  associated by thin connecting lines to their features,
  respectively. This labeling technique is known %
  as \emph{external label placement}.

  In this paper we introduce a flexible and general approach for
  external label placement assuming a \emph{contour} of the figure
  prescribing the possible positions %
  of the labels.  While much research on external label placement aims
  for fast labeling procedures for interactive systems, we focus on
  highest-quality
  illustrations. %
  Based on interviews with domain experts and a semi-automatic
  analysis of 202 handmade anatomical %
  drawings, we identify a set of 18 layout quality criteria, naturally
  not all of equal importance.  We design a new geometric label
  placement algorithm that is based only on the most important
  criteria. Yet, other criteria can flexibly be included in the
  algorithm, either as hard constraints not to be violated or as soft
  constraints whose violation is penalized by a general cost function.
  We formally prove that our approach yields labelings that satisfy
  all hard constraints and have minimum overall
  cost. %
  Introducing several speedup techniques, we further demonstrate how
  to deploy our approach in practice. In an experimental evaluation on
  real-world anatomical drawings we show that the resulting labelings
  are of high quality and can be produced in adequate time.
\end{abstract}

\section{Introduction}

Atlases of human anatomy play a major role in the education of medical
students and the teaching of medical terminology. Such books contain
a broad spectrum of filigree and detailed drawings of the human
anatomy from different cutaway views. For example, the third volume of
the popular human anatomy atlas Sobotta~\cite{sobotta} contains about
1200 figures on 384 pages. Figure~\ref{fig:example} is one of
them showing a cross section of the human skull. The
usefulness of the figures essentially relies on the naming of the
illustrated components. In order not to spoil the readability of the
figure by occluding it with text, the names are placed around the
figure without overlapping it. Thin black lines (which we call
\emph{leaders}) connecting the names with their features accordingly
guarantee that the reader can match names and features
correctly. Following preceding research, we call this labeling
technique \emph{external label placement}.  In this paper we present a
flexible and versatile approach for external label placement in
figures.  We use medical drawings as running example, but occlusion-free label placements are also indispensable for the readability of
other highly detailed figures as they occur for example in scientific
publications, mechanical engineering and maintenance manuals.

Besides readability also the aesthetics of the figures including
their labelings play a central role in professional books. Each
figure and its labels are subject to book-specific design rules.  Our
approach stands out by its ability to support an easy integration of
these specific design rules. It particularly relies only on a few key
assumptions that most figures with external label placement have in
common. Other constraints and rules can easily be patched in according
to demand.

\begin{figure}[t]
  \centering
  \includegraphics[width=0.8\linewidth]{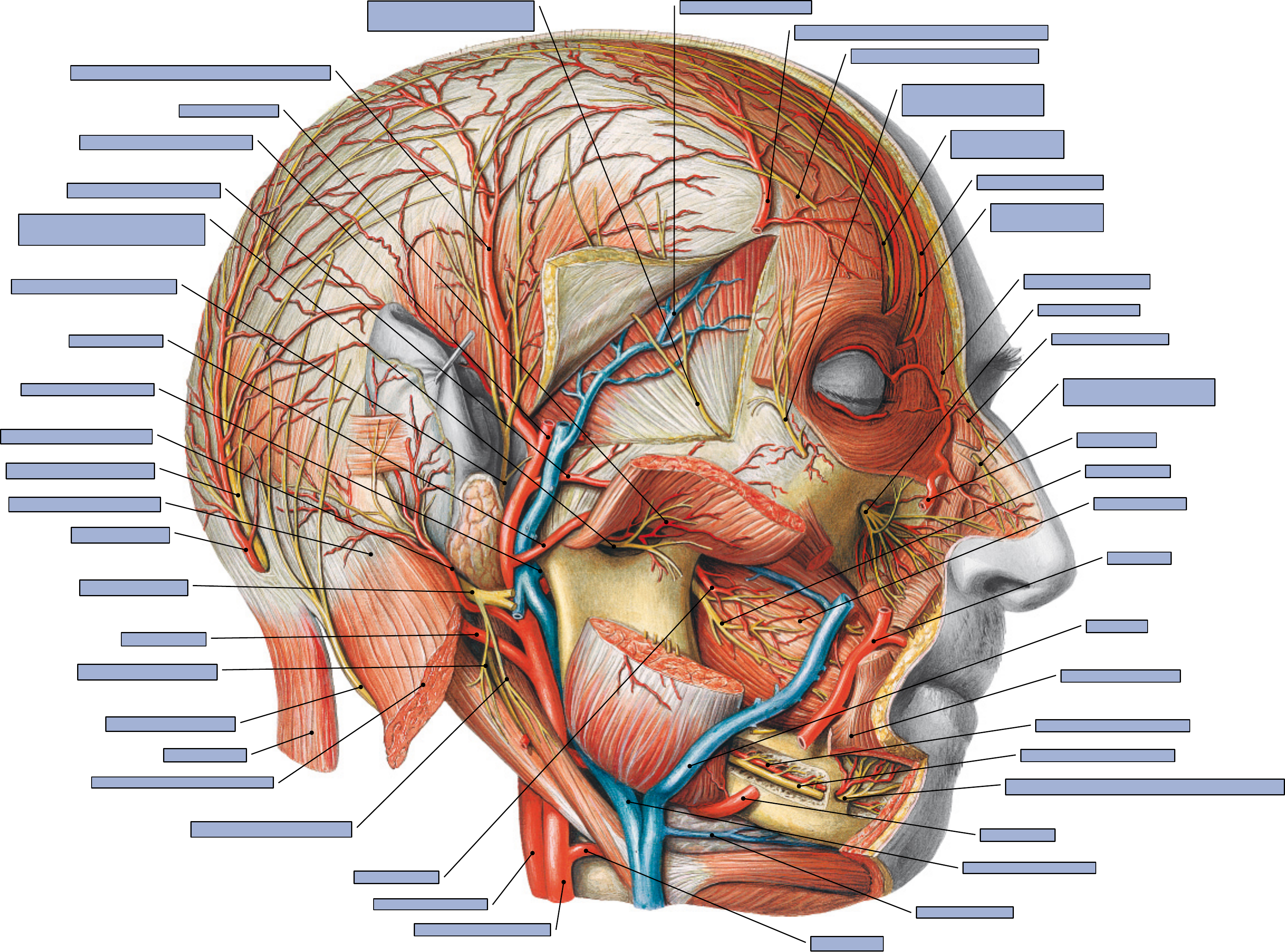}
  \caption{Drawing labeled by our approach (variant
    \TSCH, 34 sec.). Source: Paulsen, Waschke, Sobotta Atlas
    Anatomie des Menschen, 23.Auflage 2010 \copyright Elsevier GmbH, Urban \&
    Fischer, München. }
  \label{fig:example}
\end{figure}

To validate our approach, we were in contact with both a layout artist
and two editors of the human anatomy atlas Sobotta. Both the layout
artist and the two editors stated that label placement is a
mechanical, but extensively time-consuming task that is mainly done by
hand. The tool support basically comprises simple operations such as
placing text boxes and drawing line segments.  Based on medical
drawings annotated by the authors of the atlas for human anatomy, the
layout artist creates the layout of each double page of the
book. Using the annotated information, this includes arranging
explanatory texts, figures, and labels around the figures. The
interviewed layout artist stated that he needs about two hours to
create the layout of a double page. Hereby, he spends a large portion
of his time on label placement. Hence, with around 1200 figures in a
single volume better tool support would clearly help in improving the
process of creating such books. Further, with the upcoming
applications on mobile devices, figures are deployed in
differently scaled settings, which requires different external
labelings for the same figure. Then, at the latest, automatic approaches
become inevitable.

\textbf{Related Work.} External labeling algorithms have been investigated both
from a practical and theoretical point of view. The practical results
aim for fast and simple approaches that support heuristic
optimizations for multiple criteria.  These approaches are often
targeted for interactive systems requiring a fast label placement
procedure making compromises concerning quality.  For example Hartmann
et al.~\cite{Hartmann2004} and Ali et al.~\cite{Ali2005} both present
models for external label placement listing a set of criteria
concerning readability and aesthetics. They use these models to
introduce simple force-based methods. Further strategies comprise
iteratively rearranging labels~\cite{Bruckner2005}, sequentially
evaluating and placing labels by priority~\cite{Gotzelmann2006} or
exploiting small spaces for label placement~\cite{Fuchs2006}.
These approaches have in common that they are fast and simple,
  but quality guarantees can not be given. Hence, while they are
  customized to suit interactive systems, they can hardly serve as
  tools for a designer of professional books demanding complex design
  rules.   Alternatively, the drawing criteria are transferred into
an optimization problem based on a sophisticated mathematical model.
Typically, such problems are not solved exactly, but by local
optimization approaches. For example Vollick et al.~\cite{Vollick2007}
model external label placement by means of energy functions and apply
simulated annealing. Stein and D\'ecoret~\cite{Stein2008} use a simple
greedy algorithm to find a solution for mathematical
constraints. Further, Čmolík and Bittner \cite{Cmolik2010} employ a
greedy optimization using a model based on fuzzy logic. Mogalle et
al.~\cite{Mogalle2012} also applied greedy optimization. The
applications range from the interactive exploration of volume
illustrations~\cite{Bruckner2005}, over automatically annotated 2D
slices of segmented structures~\cite{Muhler2009}, up to labeling
explosion diagrams in augmented
reality~\cite{Tatzgern2013,Tatzgern2014}.

In contrast, theoretical results mostly consider simple models,
typically with one optimization criterion, e.g., minimizing the total
leader length. Bekos et al.~\cite{Bekos2007} introduced the first such
model. Typically, it is assumed that the point features to be labeled
are contained in a rectangle~$R$ representing the boundary of
the figure. The labels are placed outside that rectangle touching
  the boundary. It is assumed that the labels have uniform shapes and
  that their bounding boxes are already placed alongside~$R$.
  Hence, the
  labeling problem basically becomes a geometric matching problem
  asking for a crossing-free assignment between the placed bounding
  boxes and the point features such that each point feature is
  connected with a unique bounding box by a leader. In a
  post-processing step the label texts are written into the bounding
  boxes, accordingly. This sub-problem of external labeling is
called \emph{boundary labeling}. The preceding research mostly differs
in the choice of the parameters. Typically, the type of the leaders is
broken down into straight-line
leaders~\cite{Bekos2007,Gemsa2015b,Fink2012}, L-shaped
leaders~\cite{Bekos2007,Benkert2009,Noellenburg2010,Kindermann2015},
S-shaped leaders~\cite{Bekos2007,Bekos2010b,Huang2014} and leaders
with a diagonal segment~\cite{Benkert2009,Bekos2010}. Further, the
results distinguish between the number of sides on which the labels
may lie; on one, two or multiple sides of~$R$.  Typical
optimization functions are minimizing the total leader length or
minimizing the number of bends. Some of the approaches also allow
general cost functions (e.g.,~\cite{Benkert2009}) applying dynamic
programming. Recently, Fink and Suri~\cite{Fink2016}
  presented dynamic programming approaches for boundary labeling with
  uniform labels and rectangular boundaries using the four major
  leader types, which include straight-line leaders. For labels on two
  opposite sides of the rectangle, their approaches run in $O(n^{15})$
  (fixed label candidates) and $O(n^{27})$ (non-fixed label
  candidates) time for straight-line leaders.  A more detailed survey
was recently given by Barth et al.~\cite{Barth2015}, which also
comprises a user-study showing that straight-line leaders and L-shaped
leaders outperform the other mentioned leader types concerning
readability.
While the results on boundary labeling are interesting from a
theoretical perspective, they are hardly applicable in realistic
settings, where labels are not uniform and the boundary of the figure
is not a rectangle. 

\textbf{Our Contribution.} Our approach bridges the gap between the
practical and theoretical results. Like many of the theoretical
results, it uses a clearly and mathematically defined model to
guarantee pre-defined design rules. However, in contrast to preceding
research our approach is significantly more flexible in adapting other constraints.  After
introducing some terminology (Sect.~\ref{sec:terminology}) we present
an extensive list of drawing criteria for external label placement. 
Our list of criteria contains and refines the quality criteria listed in previous work~\cite{Hartmann2004,Ali2005} and identifies several more.
We stress that, in contrast to previous work, our criteria are obtained directly from interviews with layout artists and editors of anatomical atlases. Moreover, we empirically verified the level of compliance with these criteria for 202 figures printed in the
Sobotta~\cite{sobotta} atlas (Sect.~\ref{sec:criteria}) using a semi-automatic quantitative analysis.

Based on a reasonable subset of the most important criteria we
introduce a flexible formal model for \emph{contour labeling}, which
is a generalization of boundary labeling
(Sect.~\ref{sec:model}). Afterwards we describe a basic dynamic
programming approach that solves the mathematical problem optimally
(Sect.~\ref{sec:dp}).  Our approach allows to include further drawing
criteria both as \emph{hard} and \emph{soft} constraints, where hard
constraints may not be violated at all and the compliance of soft
constraints is rated by a cost function.  Previous work rarely uses
hard constraints and cannot easily include new hard constraints.
Moreover, in contrast to previous algorithms that compute mathematical optimal solutions, our approach also takes
consecutively placed labels into account. At first glance this seems
to be a small improvement, but in fact it is important to obtain an
appealing labeling where, for example, labels have regular distances
or the angles of consecutive labels should be similar.  Further, our
approach supports labels of different sizes. Indeed, for each point
feature, the user can pre-define a set of different label shapes,
which do not need to be rectangles. This may be used to model
different ways of text formatting supporting single and multi-line
labels. Further, the user may specify for each label an individual set
of candidate positions that is used for the label placement
procedure. Moreover, the user may mark areas that are not allowed to
be overlapped by labels including their leaders. This is important to
avoid undesired overlaps with the figure or to integrate the figure
along with its labeling into a double page with explanatory text. The
approach also allows to pre-define groups of labels that are placed
consecutively, which is required when naming features that are
semantically related.

Our approach is not limited to the described features, but other
criteria can be incorporated easily. The strength of our approach
comes at the cost of a high asymptotic running time $O(n^8)$,
where $n$ describes the complexity of the input
instance. Recently, Keil et al.~\cite{Keil2016} presented a
  similar general dynamic programming approach for computing an
  independent set in outerstring graphs, which can be utilized to
  solve contour labeling in $O(n^6)$ time for a general cost
  function rating individual labels; however, it cannot take joint costs of two consecutive labels into account. 
In contrast to Fink and
  Suri~\cite{Fink2016} our approach is significantly faster ($O(n^8)$
  instead of $O(n^{15})$) and it supports non-uniform labels and
  more general shapes.  
With some engineering
  (Sect.~\ref{sec:engineering}) we can solve realistically sized
  instances in adequate time and high layout quality as is shown in our evaluation (Sect.~\ref{sec:experiments}) on a large benchmark set of real-world instances.

\section{Terminology}\label{sec:terminology}
\begin{figure}[t]
  \centering
  \includegraphics{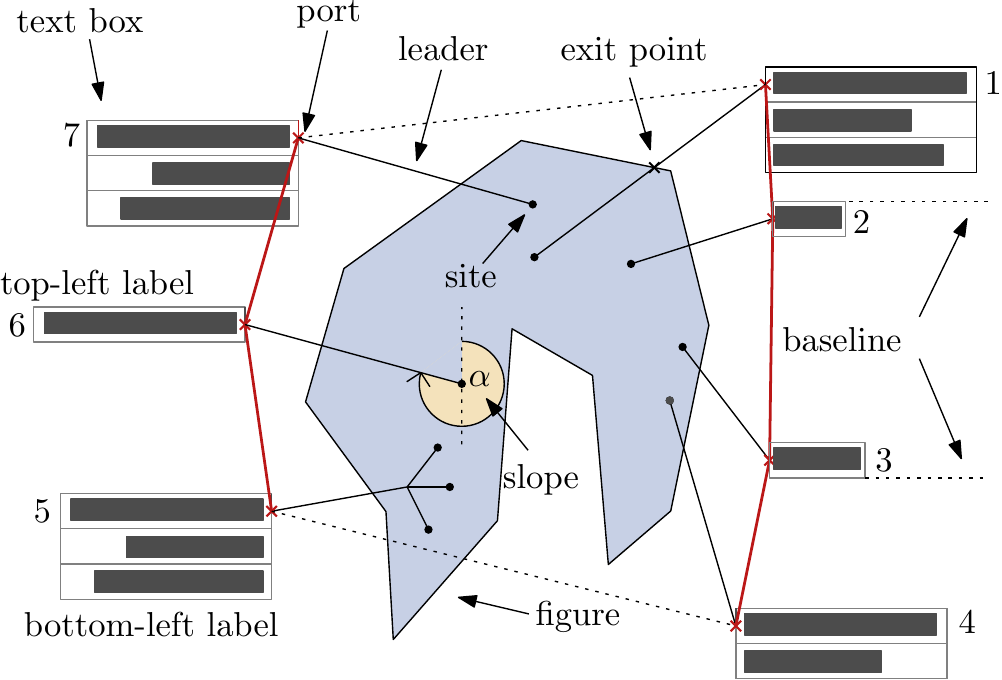}
  \caption{Terminology. The radial ordering is given by the numbers
    placed closely to the text boxes.}
  \label{fig:terminology}
\end{figure}
An \emph{illustration} with \emph{external labeling} consists of a
\emph{figure} as well as a set of labels outside of the figure naming
single point features of the figure.  Hereby a \emph{label} consists
of a \emph{text box} that lies outside of the figure and a line
segment connecting the text box with its point feature; see
Fig.~\ref{fig:terminology}. We call the line segment the \emph{leader}
and the point feature the
\emph{site} %
of the label.
More precisely, the text box is a rectangle containing a (possibly
multi-line) text. Typically, the rectangle is not displayed, but it is
used for the further description. We assume that the leader of a label
ends on the boundary  of the text box; we call that
point the \emph{port} of the label. A leader is directed from its site
to its port.  A label whose leader goes to the left is called a
\emph{left label} and a label whose leader goes to the right is called
a \emph{right label}. Analogously, a label is a \emph{top label}
(\emph{bottom label}) if its leader goes upwards (downwards).  The
\emph{baseline} of a bottom-right label is the horizontal half line
that emanates from the bottom-right corner of the label's text box to
the right. For a top-right label the \emph{baseline} is the horizontal
half line that emanates from the top-right corner of the label's text
box to the right. The baselines for bottom-left and top-left labels are
defined symmetrically.  We define the \emph{slope} of a leader is the
clockwise angle (starting at 12 o'clock) between the leader and the
vertical segment going through the connected site.  A leader of a
labeling intersects the contour of the figure at its \emph{exit
  point}; in case that a leader intersects a figure multiple times, we
regard the intersection point closest to the port. Traversing the
  figure's boundary in clockwise order starting from the boundary's
  topmost point  defines an ordering on the
  exit points of the leaders and accordingly on the labels; we call
  this the \emph{radial ordering} of the labeling.  Two labels are
\emph{consecutive} if one directly follows the other in the radial
ordering.  The \emph{labeling contour} is the polygon
 that connects the ports of the labels in the given
radial ordering.

\section{Drawing Criteria}\label{sec:criteria}

\begin{figure}[t]
  \centering
  \includegraphics[page=2,width=\linewidth]{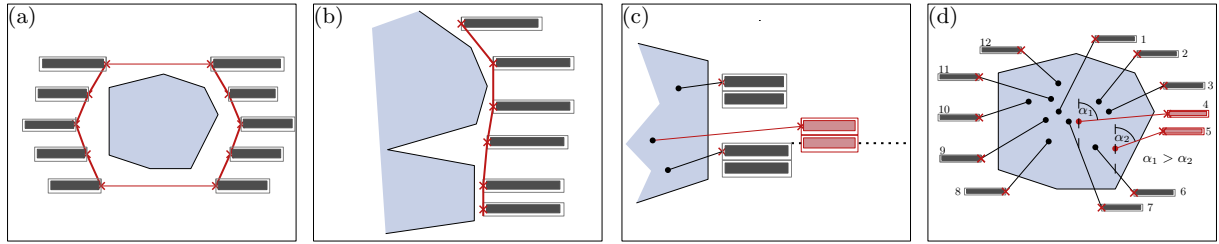}
  \caption{Drawing criteria. (a)~Criterion~G\ref{crit:shape}
    (b)~Criterion~\ref{crit:similarity}
    (c)~Criterion~T\ref{crit:shadowing}, which is violated by the red
    middle label.  (d)~Criterion~L\ref{crit:monotone}, which is violated by Label 4 and Label 5.}
  \label{fig:criteria}
\end{figure}

We conducted interviews with domain experts (one layout artist and two
editors of~\cite{sobotta}) in order to extract a comprehensive set of
important layout quality criteria as listed below. During these
interviews, the experts explained how they typically proceed when
labeling a single illustration. Further, they were asked to list
layout criteria that they explicitly take into account. The extracted
formal criteria were subsequently discussed using example
illustrations printed in Sobotta~\cite{sobotta}.

\noindent\textbf{Drawing criteria for sites.} 
\begin{compactenum}[S1]
\item  \textit{Shape.} Sites are represented by small points in the
  drawing. 
\item \textit{Position.} \label{crit:sites}The position of a site is prescribed by domain
  experts and can be assumed to be fixed and
  given with the input.
\item \textit{Type.} %
  Either a site has its own label, or
  multiple sites have the same label. In the latter case the leaders
  are bundled forking at a certain point; see Label 5 in
  Fig.~\ref{fig:terminology}.
\end{compactenum}
For simplicity, we assume that we only have sites of the
first type, but with some engineering our algorithms can also be
adapted to the second case.  Now consider an external
labeling of a medical drawing. We have extracted the following
criteria.

\noindent \textbf{General drawing criteria.}
\begin{compactenum}[G1]
\item \label{crit:extern}\textit{Externality.} The text boxes
  of the labels are placed outside the drawing in the available
  areas.
\item \label{crit:planar}\textit{Planarity.}  To sustain
  readability, labels may not overlap or intersect each other.
\item \label{crit:shape}\textit{Simple shape.} The labeling
  contour should be simple avoiding turning points; see
  Fig.~\ref{fig:criteria}(a). 
\item \label{crit:sides}\textit{Left/right side.} The radial order of a labeling can be
  partitioned into a sequence of left labels and a sequence of right
  labels. Consequently, the labeling contour can be partitioned into a
  \emph{left labeling contour} and a \emph{right labeling contour}.
\item \textit{Similarity.}\label{crit:similarity} The labeling contour
  \emph{mimics} the contour of the figure such that small
  ``indentations'' of the figure are not taken into account; see
  Fig.~\ref{fig:criteria}(b).
\item \textit{Grouping.} Labels may be required by the designer to
  appear consecutively in the radial ordering of the labeling.
\end{compactenum}

\noindent \textbf{Drawing criteria for text boxes.}
\begin{compactenum}[T1]
\item \textit{Spacing.}  The vertical distances between text boxes are
  preferably uniform. Distances less than the height of one text line
  should be avoided, but may be admissible if not preventable.
\item \textit{Appearance.} For all labels the same font is applied. Differences in the importance of labels may be expressed by different emphasis (bold, italic).
\item \textit{Single/Multi line.} If possible, a text box should
  consist of a single-line text. Only due to the available space,
  text boxes may consist of multiple lines.
\item \label{crit:ports}\textit{Ports.} For left (right) labels the port lies on the right (left) edge of the text box in the vertical center of the first text line. 

\item \label{crit:shadowing}\textit{Staircase.} Let~$\ell_1$
  and $\ell_2$ be two consecutive labels. Neither~$\ell_1$ nor
  $\ell_2$ intersects the baseline of the other; see
  Fig.~\ref{fig:criteria}(c).
\end{compactenum}

\noindent \textbf{Drawing criteria for leaders.}
\begin{compactenum}[L1]
\item \label{crit:length}\textit{Length}. The part of a leader
  covering the figure should be (preferably) short.
\item \textit{Distinctiveness.} Leaders running close together
    should not be parallel to avoid reader confusion.
\item \textit{Distance.} Leaders preferably comply with a minimum
  distance to sites of other leaders.
\item \label{crit:monotone}\textit{Monotonicity.} The slope of the
  leaders increases with respect to the radial ordering of the
  leaders; see Fig.\ref{fig:criteria}(d).
\end{compactenum}

Typically a labeling does not fully satisfy all these criteria, but
criteria may contradict each other requiring appropriate
comprises. For example requiring monotonicity (L\ref{crit:monotone})
may enlarge the total leader length, which conflicts with criterion
L\ref{crit:length}.  Our approach is characterized by the fact that
these compromises are not already made during the design of the
labeling algorithm, but they lie in the hand of the layout artist
applying the algorithm. Specifically, our approach only needs criteria
S\ref{crit:sites}, G\ref{crit:extern}, G\ref{crit:planar},
G\ref{crit:sides} and T\ref{crit:shadowing} as hard constraints not to
be violated. Further, we assume that we are given a simple shape
(G\ref{crit:shape}) enclosing the figure and prescribing possible
positions of ports. All other criteria are optional, but can be easily
patched in as either hard or soft constraints as needed. Hereby the
compliance of soft constraints is rated by means of a general cost
function that can be defined when applying the algorithm.  In our
interviews the domain experts strongly emphasized the importance of
G\ref{crit:planar} and G\ref{crit:shape}. They further pointed out
that labels should not be placed behind other labels, which we express
by T\ref{crit:shadowing}.  We further analyzed 202 medical drawings of
Sobotta~\cite{sobotta} in a semi-automatic way.  To that end we
  vectorized the images by extracting the text boxes, the leaders, the
  sites and the contour of the figure. More precisely, we computed the
  difference of two images showing the same object; one with labels
  and one without labels. From this difference we automatically
  extracted the contour of the figure, the text boxes and the ports of
  the leaders. The leaders were manually extracted. In case that a
leader was connected to multiple sites, we have pragmatically
extracted the leader only up to the first fork point~$p$ and placed a
single site at this point.  In case that~$p$ was not
contained~$\Figure$, we took the projection of~$p$ on the boundary of
$\Figure$ along the half-line from the port of $\lambda$ through $p$.
The figures contain between $4$ and $64$ sites; see also
Fig.~\ref{fig:quality:pd10}. 

 All of these examples satisfy
S1, G1, G2, G4, and T4. Further, 18 figures contain at least one set
of labels that are explicitly grouped by a large curly brace (G6).
Only a dwindling small percentage $(0.4\%)$ of all labels violate the
staircase property (T5) and about $6.2\%$ violate monotonicity
(L4). Since the other criteria are soft, we did not quantitatively
check these in the semi-automatic analysis; yet, they are well founded
in the conducted interviews.

\section{Formal Model}\label{sec:model}
We now describe a formal model for external label placement.  We are
given a simple polygon~$\Figure$ that describes the contour of the
figure and contains $n$ sites to be labeled. We denote the set of the
sites by $\Sites$ and assume that the sites are in general position,
i.e., no three sites are colinear\footnote{This assumption can be met
  by slightly perturbing sites.}.  For each site~$s\in \Sites$ we
describe its \emph{label}\footnote{To ease presentation we define that the leader is a
  component of the label. In preceding research only the rectangle~$r$
  is called label.}~$\ell$ by a rectangle~$r$ and an oriented
line segment $\lambda$ that starts at $s$ and ends on the boundary
of~$r$.  We call $\lambda$ the \emph{leader} of $\ell$,
$r$ the \emph{text box} of $\ell$, and the end point of~$\lambda$
on~$r$ the \emph{port} of $\ell$.  The other end point is the site
of~$\ell$; see Fig.~\ref{fig:terminology}. In the following we only
consider labels whose text boxes satisfy T4.

A set~$\mathcal L$ of labels over~$\Sites$ is called an \emph{external
  labeling} of $(\Figure,\Sites)$, if
\begin{inparaenum}[(1)]
\item $|\mathcal L|=|\Sites|$,
\item for each site $s\in \Sites$ there is exactly one label in~$\mathcal L$ that belongs to $s$, and 
\item every text box of a label in~$\mathcal L$ lies outside
  of~$\Figure$. 
\end{inparaenum} If no two labels in $\mathcal
L$ intersect each other, $\mathcal L$ is \emph{planar} . A labeling~$\mathcal L$ is called a
\emph{staircase labeling} if it satisfies criterion~T\ref{crit:shadowing}.

Let~$\mathcal L$ be a planar labeling. Let $\ell_1,\dots,\ell_n$ be
the labels of~$\mathcal L$ in the radial ordering. For simplicity we
define~$\ell_{n+1}:=\ell_{1}$. The \emph{cost}~$c$ of a
labeling~$\mathcal L$ is defined as
$c(\mathcal L)=\sum_{i=1}^n c_1(\ell_i) + c_2(\ell_i,\ell_{i+1})$, 
where $c_1$ is a function assigning a cost to a single label~$\ell_i$
and $c_2$ is a function assigning a cost to two consecutive
labels~$\ell_i$ and $\ell_{i+1}$. We note that in contrast to previous
research the cost function also supports rating two consecutive
labels, which is crucial to set labels in relation with each other.
Given the cost function~$c$, the problem \ExternalLabeling then asks
for a planar labeling~$\mathcal L$ of $(\Figure,\Sites)$ that has
minimum cost with respect to $c$, i.e., for any other planar
labeling~$\mathcal L'$ of $(\Figure,\Sites)$ it holds~$c(\mathcal
L)\leq c(\mathcal L')$.

We consider the special case that the ports of the
  labels lie on a common \emph{contour} enclosing $\Figure$.  In contrast to
  classical boundary labeling, which assumes a rectangular figure,
  this contour schematizes the shape of the figure with a certain
  offset; in Section~\ref{sec:experiments} we shortly describe how to
  construct a reasonable contour. Thus, the contour describes the
  common silhouette formed by the labels.  We assume that the contour is
  given as a simple polygon~$\Contour$
  enclosing~$\Figure$. An external labeling~$\mathcal L$ is called a \emph{contour
    labeling} if for every label of $\mathcal L$ its leader lies
  inside~$\Contour$ and its port lies on the boundary~$\partial\Contour$ of~$\Contour$.
  Since not every part of $\Contour$'s boundary may be suitable for the
  placement of labels, we require that the ports of the labels are
  contained in a given subset $\Ports\subseteq \partial\Contour$ of candidate
  ports. If $\Ports$ is finite, the input instance has
  \emph{fixed ports} and otherwise \emph{sliding ports}.

\begin{observation}
  In a planar contour labeling the ports of the labels induce the
  same radial ordering with respect to $\Contour$ as the exit
  points of the labels with respect to $\Figure$.
\end{observation}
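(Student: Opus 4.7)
The plan is to work in the annular region $R$ obtained as the closure of $\Contour \setminus \Figure$. Since $\Figure$ and $\Contour$ are both simple polygons with $\Figure$ enclosed by $\Contour$, the region $R$ is topologically an annulus whose inner boundary is $\partial\Figure$ and whose outer boundary is $\partial\Contour$. For each label $\ell_i$, let $e_i$ denote its exit point on $\partial\Figure$ and $p_i$ its port on $\partial\Contour$. Because the leader $\lambda_i$ lies inside $\Contour$ (contour labeling) and because $e_i$ is defined as the intersection of $\lambda_i$ with $\partial\Figure$ closest to $p_i$, the straight sub-segment $\sigma_i := \overline{e_i p_i}$ of $\lambda_i$ lies entirely in $R$ and touches $\partial R$ only at its two endpoints.

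The first step is to observe that the collection $\{\sigma_1,\dots,\sigma_n\}$ consists of pairwise interior-disjoint arcs in $R$: any interior crossing of $\sigma_i$ and $\sigma_j$ would be a crossing of $\lambda_i$ and $\lambda_j$, contradicting planarity of $\mathcal{L}$. The second step is the topological claim that $n$ pairwise non-crossing simple arcs in an annulus, each joining the inner to the outer boundary, induce the same cyclic order on the two boundary components. This can be proved by a standard cut-and-paste argument: fix $\sigma_1$ and cut $R$ along it. The result is a topological disk $D$ whose boundary consists (in order) of a portion of $\partial\Figure$, $\sigma_1$, a portion of $\partial\Contour$, and a second copy of $\sigma_1$. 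For any other $\sigma_j$, its endpoints $e_j$ and $p_j$ appear on the two "figure" and "contour" portions of $\partial D$, and since $\sigma_j$ is a simple arc in the disk $D$ disjoint from $\sigma_1$, the cyclic boundary order on $\partial D$ forces the relative positions of $e_j$ on the cut copy of $\partial\Figure$ and $p_j$ on the cut copy of $\partial\Contour$ to agree. Iterating this cutting argument yields the claim for all $n$ arcs.

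Finally, to translate the cyclic agreement into the statement about the radial ordering, one observes that the radial ordering is itself a cyclic ordering of the labels; choosing the topmost point of $\partial\Figure$ (respectively $\partial\Contour$) as the starting reference merely fixes a linearization of the same cyclic order, and swapping between the two starting references amounts to a cyclic shift that does not change the induced order of consecutive labels. Since the general-position assumption on sites (together with the fact that leaders are straight segments) guarantees that no two $e_i$'s and no two $p_i$'s coincide, the bijection $e_i \leftrightarrow p_i$ is well defined and the cyclic orders on $\partial\Figure$ and $\partial\Contour$ coincide.

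The main obstacle I expect is the clean formulation of the topological lemma on non-crossing arcs in an annulus, together with the bookkeeping around leaders that may re-enter $\Figure$ before exiting for the last time. The definition of $e_i$ as the exit point closest to $p_i$ is precisely what sidesteps the latter issue, since it ensures $\sigma_i \subseteq R$; making this explicit and then invoking the Jordan curve theorem (or an annulus-cutting argument as above) should dispatch the observation cleanly.
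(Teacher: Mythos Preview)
The paper states this as an Observation without proof, so your argument supplies details the authors leave implicit. The annulus approach is correct: the sub-segments $\sigma_i=\overline{e_ip_i}$ are pairwise non-crossing arcs joining the two boundary components of $\overline{\Contour\setminus\Figure}$, and cutting along one of them reduces the cyclic-order claim to a Jordan-curve argument in a disk.

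Two minor corrections. First, general position of the sites is not what guarantees distinct exit points; rather, if $e_i=e_j$ were a common interior point of the straight leaders $\lambda_i$ and $\lambda_j$, the two segments would either properly cross there or overlap on a neighborhood of that point, and both alternatives contradict planarity. Second, the linearization issue you flag is real, and your argument only establishes agreement of the \emph{cyclic} orders: the topmost point of $\partial\Figure$ and the topmost point of $\partial\Contour$ need not be connected by an arc in the annulus that avoids all $\sigma_i$, so the two linear radial orders can in principle differ by a cyclic shift. Your remark that such a shift leaves consecutive pairs unchanged is exactly what the paper's dynamic program requires, so for the paper's purposes the cyclic statement suffices; strictly speaking, though, the Observation as worded asserts the slightly stronger linear statement.
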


A tuple $\Instance=(\Contour,\Sites,\Ports)$ is called an
\emph{instance} of contour labeling. The \emph{region} of $\Instance$
is the region enclosed by $\Contour$.  We restrict ourselves to convex
contours and clearly separated sites and text boxes as follows implementing
Criteria~G\ref{crit:extern} and~G\ref{crit:shape}, respectively.

\begin{assumption}\label{assumption:convex-hull}
  The contour $\Contour$ is convex and no text box of any label 
  intersects the convex hull of $\Sites$.
\end{assumption}

For all of the 202 analyzed medical drawings it holds that no text box
of any label intersects the convex hull of~$\Sites$. 

Due to the convexity of~$\Contour$, the contour can be uniquely split
into a left and right side described by two maximal $y$-monotone
chains~$\Contour_\mathrm{L}$ and $\Contour_\mathrm{R}$,
respectively. The next assumption implements Criterion~G4.

\begin{assumption}\label{assumption:independent-chains}
  A left label has its port on $\Contour_\mathrm{L}$ and a right
    label has its port on $\Contour_\mathrm{R}$.
\end{assumption}

Given a cost function~$c$, the problem \ContourLabeling  then asks for
an (cost) optimal, planar staircase contour labeling~$\mathcal L$ of
$(\Contour,\Sites,\Ports)$ with respect to $c$, i.e., for any other
planar staircase contour labeling~$\mathcal L'$ of
$(\Contour,\Sites,\Ports)$ it holds that~$c(\mathcal L)\leq c(\mathcal
L')$.

\section{Algorithmic Core}\label{sec:dp}

In this section we describe how to construct the optimal
labeling~$\mathcal L$ of a given instance $(\Contour,\Sites,\Ports)$
with respect to a given cost function~$c$. To that end we apply a
dynamic programming approach. The basic idea is that any optimal
contour labeling can be recursively decomposed into a set of
sub-labelings inducing disjoint sub-instances. As we show later, these
sub-instances are specially formed; we call them \emph{convex sub-instances}.  We
further show that any such sub-instance can be described by a constant
number of parameters over $\Sites$ and $\Ports$. Hence, enumerating
all choices of these parameters, we can enumerate in polynomial time
all possible convex sub-instances that an optimal labeling may consist
of. For each such sub-instance we compute the cost of an optimal
labeling reusing the results of already computed values of smaller
sub-instances. In this way we obtain the value of the optimal labeling
for the given instance. Summarizing, our approach consists of four
steps.

\textsc{Step 1}. Compute all possible convex sub-instances by
enumerating all possible choices defined  over $\Sites$ and
$\Ports$.

\textsc{Step 2}. In increasing order of the number of contained sites,
compute the optimal cost for each convex
sub-instance~$\Instance$. More precisely, to compute the optimal cost
of~$\Instance$ consider all possibilities how $\Instance$ can be
composed of at most two smaller convex sub-instances.

\textsc{Step 3}. Consider all possibilities how the input instance can
be described by a convex sub-instance. Among these, take the convex
sub-instance with optimal cost.

\textsc{Step 4}. Starting with the resulting sub-instance of
\textsc{Step 3}, apply a standard backtracking approach for dynamic
programming to construct the corresponding labeling with optimal
costs.

In the remainder of this section we explain the approach in more
detail. In Section~\ref{sec:struc-properties}, we first prove some
structural properties on contour labelings. These properties are
crucial for the dynamic programming approach, which we describe in
Section~\ref{sec:dyn-prog}.

\subsection{Structural Properties of Contour Labelings}
\label{sec:struc-properties}
The intersection of two labels is characterized by three types: the two leaders intersect, the two text boxes intersect or the leader of one label intersects the text box of the other label.
The following lemma allows us to find planar labelings by avoiding
leader-leader intersections and intersections between two
consecutive labels.

\begin{figure}[t]
\centering
   \includegraphics[page=4]{./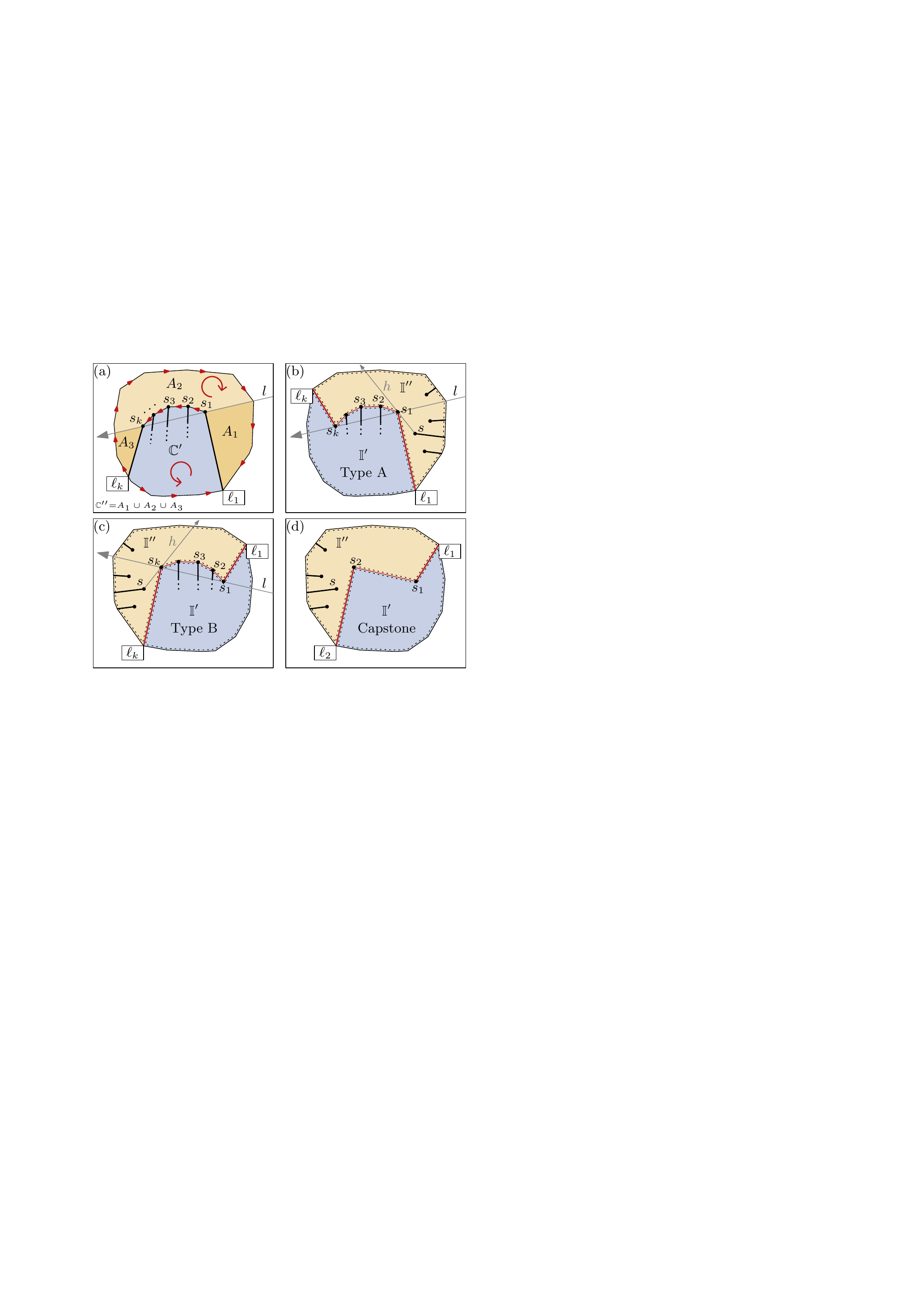}
   \caption{   Illustration of proof for Lemma~\ref{lem:shadow-free}.}\label{fig:shadowed-labels}
\end{figure}

\begin{lemma}\label{lem:shadow-free}
  Let $\Instance$ be an instance of \ContourLabeling and let~$\mathcal
  L$ be a staircase contour labeling of $\Instance$. If no pair of leaders
  intersect and if no two consecutive labels intersect, then $\mathcal
  L$ is planar.
\end{lemma}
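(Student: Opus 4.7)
My plan is a proof by contradiction combined with a minimum counterexample argument. Suppose $\mathcal{L}$ satisfies the two hypotheses but fails to be planar, so there exist labels $\ell_i,\ell_j\in\mathcal{L}$ whose intersection is nonempty. Since consecutive labels are assumed disjoint, the cyclic radial distance between $i$ and $j$ is at least two; pick $(i,j)$ minimising this distance, and by Assumption~\ref{assumption:independent-chains} both labels lie on the same chain of $\partial\Contour$, without loss of generality with $j>i$ and both being right labels. Since no pair of leaders crosses, the nonempty intersection $\ell_i\cap\ell_j$ must involve at least one text box.

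Next I would single out $\ell_{i+1}$ as a separator label. Consecutiveness of $(\ell_i,\ell_{i+1})$ gives, by hypothesis together with criterion T\ref{crit:shadowing}, that $\ell_i$ is disjoint from both $\ell_{i+1}$ and the baseline $B_{i+1}$. The pair $(\ell_{i+1},\ell_j)$ has smaller cyclic distance than our chosen minimum, so minimality forces $\ell_j$ to be disjoint from $\ell_{i+1}$ as well. Combined with global leader non-crossing, both $\ell_i$ and $\ell_j$ avoid $\lambda_{i+1}\cup r_{i+1}$.

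The conclusion should come from a topological separation argument. The concatenation $\gamma:=\lambda_{i+1}\cup\partial r_{i+1}\cup B_{i+1}$ traces a simple arc from the site $s_{i+1}$, across $\partial\Contour$ at the port $p_{i+1}$, along an edge of $r_{i+1}$, and out to infinity along the baseline. Together with $\partial\Contour$ it partitions the exterior of $\Contour$ into two cells, and the radial ordering places the exit points of $\ell_i$ and $\ell_j$ in different cells. Since $\ell_i$ avoids $\gamma$ entirely (as argued above), if we can also show $\ell_j\cap\gamma=\emptyset$ then $\ell_i$ and $\ell_j$ lie in disjoint cells, contradicting $\ell_i\cap\ell_j\neq\emptyset$.

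The main obstacle is verifying the last ingredient: that $\ell_j$ does not cross the baseline $B_{i+1}$. The staircase criterion supplies this only for labels consecutive to $\ell_{i+1}$, which $\ell_j$ is not when $j>i+2$. I expect to handle this either by strengthening the induction so that baseline-avoidance is propagated alongside label-disjointness, or---probably more cleanly---by iterating the staircase criterion along the consecutive pairs $(\ell_{i+1},\ell_{i+2}),\ldots,(\ell_{j-1},\ell_j)$ to show that the intermediate baselines form a monotone ``staircase'' of horizontal half-lines which, together with the intermediate text boxes (all disjoint from $\ell_j$ by minimality of $j-i$), confine $\ell_j$ to the cell opposite $\ell_i$. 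Once this is in hand the contradiction is immediate and $\mathcal{L}$ must be planar.
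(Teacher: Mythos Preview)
Your separation argument has a genuine gap in the leader--text-box case. The arc $\gamma=\lambda_{i+1}\cup\partial r_{i+1}\cup B_{i+1}$ terminates at the site $s_{i+1}$, an \emph{interior} point of $\Contour$; consequently $\gamma\cup\partial\Contour$ does not disconnect the interior of $\Contour$ (one can simply go around $s_{i+1}$). You then assert that $\ell_i$ and $\ell_j$ ``lie in disjoint cells'', but both labels reach into that interior via their leaders, where there is only one cell. This would be harmless if the intersection $\ell_i\cap\ell_j$ were guaranteed to lie in the exterior of $\Contour$, but nothing in the model forces that: text boxes are only required to avoid $\Figure$ and, by Assumption~\ref{assumption:convex-hull}, the convex hull of $\Sites$---not to lie outside $\Contour$. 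Hence a leader $\lambda_j\subset\Contour$ can meet a non-consecutive text box inside $\Contour$, precisely where your curve fails to separate. Your iterated-staircase plan does not help here either, since T\ref{crit:shadowing} constrains only text-box--baseline incidences, not leader--box incidences. The paper resolves this case by a different device you never invoke: it picks the offending leader whose first hit point $t$ on the text box of $\ell$ is closest along $\partial\ell$ to the port $p$, finds an intermediate label $\ell''$ whose site $s''$ must then be trapped in the pocket bounded by $\partial\ell$, the segment $\overline{tp'}$, and $\partial\Contour$, and observes that the chord $\overline{s's''}\subset\Contour$ crosses the text box of $\ell$---so that text box meets the convex hull of $\Sites$, contradicting Assumption~\ref{assumption:convex-hull}.

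A smaller issue: Assumption~\ref{assumption:independent-chains} does not by itself force two intersecting labels onto the same chain; it only pins left and right labels to $\Contour_{\mathrm L}$ and $\Contour_{\mathrm R}$. You still need the convexity of $\Contour$ together with T\ref{crit:ports} to rule out opposite-side text-box overlaps, which is exactly what the paper does. For the remaining same-side text-box--text-box case your iterated-staircase idea is essentially the paper's (terse) argument---one of the two boxes must cross an intermediate label's baseline---so that part is fine once written out.
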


\begin{proof}
   We prove that $\mathcal L$~is planar by systematically
    excluding the possible types of intersections.

  \textit{Text-box--text-box intersection.} Assume that there are two
  labels $\ell$ and $\ell'$ that are not consecutive and whose text
  boxes intersect. The labels either lie on the same side or on
  different sides of~$\Contour$.
  
  First consider the case that $\ell$ and $\ell'$ belong to different
  sides; without loss of generality let $\ell$ be a left label and
  $\ell'$ be a right label. Due to T4, text boxes of $\ell$ and
  $\ell'$ may only intersect if the port~$p$ of $\ell$ lies to the
  right of the port~$p'$ of $\ell'$. Since~$p$ lies on
  $\Contour_\mathrm{L}$ and $p'$ lies on $\Contour_\mathrm{R}$, this
  contradicts the convexity of $\Contour$.

  Now consider the case that $\ell$ and $\ell'$ belong to the
  same side; without loss of generality let both be left labels; see
  Fig.~\ref{fig:shadowed-labels}(a). For intersecting each other, one of
  both labels intersects the base line of a left label in between both
  labels contradicting Criterion~T5.
  
 \textit{Text-box--leader intersection.}  Now assume that there is
  a label $\ell$ whose text box is intersected by the
  leader~$\lambda'$ of another label~$\ell'$; see
  Fig.~\ref{fig:shadowed-labels}(b). We denote the ports of $\ell$ and
  $\ell'$ by $p$ and $p'$ respectively.  Further, let~$t$ be the first
  intersection point of $\lambda'$ with $\ell$ going along
  $\lambda'$. We choose $\ell'$ such that there is no other leader
  intersecting~$\ell$'s boundary~$c$ between $p$ and $t$. Let~$R$ be
  the region that is bounded by the boundary of $\ell$ from $p$ to
  $t$, the line segment $\overline{tp'}$, and the boundary $c'$ of
  $\Contour$ from $p'$ to $p$.  Since $\ell$ and $\ell'$ are not
  consecutive, there is a label $\ell''$ with port $p''$ on~$c'$. The
  site $s''$ of $\ell''$ lies in $R$ because otherwise the leader of
  $\ell''$ intersects $c$ or the segment~$\overline{tp'}$. Due to the
  convexity of~$\Contour$, the segment $\overline{s's''}$ is contained
  in~$\Contour$. Since~$s'$ lies in the complement of~$R$, the segment
  $\overline{s's''}$ intersects $c$, which implies that $\ell$
  intersects the convex hull of $\Sites$ contradicting
  Assumption~1.
\end{proof}

\begin{figure}[t]
 \centering
 \includegraphics[width=\linewidth]{./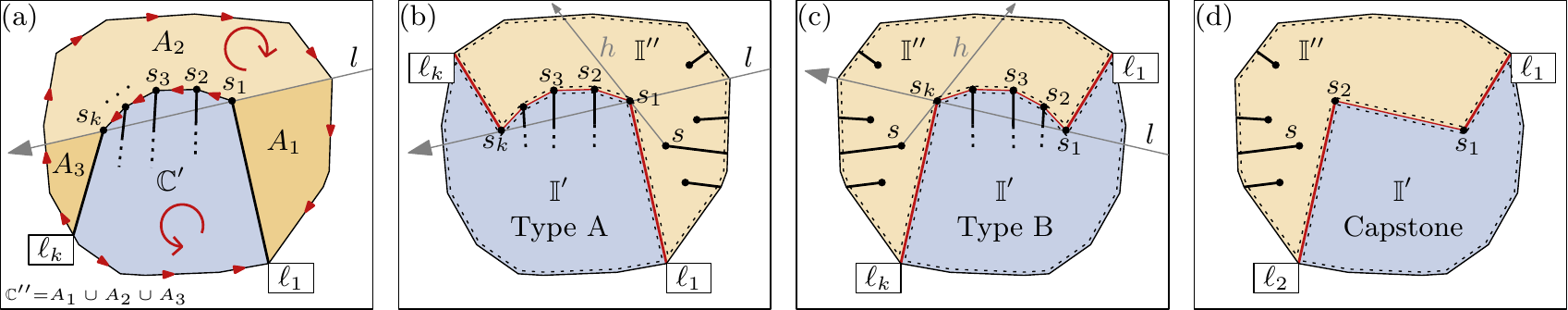}
 \caption{Decomposition in convex instance $\Instance'$ (blue) and concave instance $\Instance''$ (orange). (a)~Illustration of basic definitions. (b)~Type~$A$ instance with $k>2$. (c)~Type $B$ instance with $k>2$. (d)~Capstone instance. }
 \label{fig:basic-instances}
\end{figure}

In the following, we show that each instance can be subdivided into a
finite set of sub-instances of three types.  We describe a
sub-instance by a simple polygon that consists of two polylines. One
polyline is part of the original contour $\Contour$ and the other
polyline consists of a convex chain of sites and two leaders.  More
precisely, assume that we are given a convex chain
$K=(s_1,\ldots,s_k)$ of sites with $k\geq 2$ and the two
non-intersecting labels $\ell_1$ and $\ell_k$ of $s_1$ and $s_k$,
respectively; see Fig.~\ref{fig:basic-instances}(a).  The directed
polyline $K'=(p_1,s_1,\dots,s_k,p_k)$ splits the polygon $\Contour$
into two polygons $\Contour'$ and $\Contour''$, where $p_1$ and
$p_k$ are the ports of $\ell_1$ and $\ell_k$, respectively. We consider the order of
the sites such that we meet $p_1$ before $p_k$ when going along the
contour of $\Contour$ in clockwise-order starting at the top of
$\Contour$. Further, going along $K'$ we denote the sub-polygon to the
left of $K'$ by $\Contour'$ and to the right of $K'$ by $\Contour''$. With
respect to the direction of $K'$, the sub-polygon $\Contour'$ is
counter-clockwise oriented, while $\Contour''$ is clockwise
oriented. Further, $\Contour''$ contains the top point of
$\Contour$. We define that $\Contour'$ contains the sites
$s_{2},\dots,s_{k-1}$, while $\Contour''$ does not contain them.

Thus, the polyline~$K'$ partitions the instance
$(\Contour,\Sites,\Ports)$ into two sub-instances $\Instance'=(\Contour',\Sites',\Ports')$ and
$\Instance''=(\Contour'',\Sites'',\Ports'')$ such that
\begin{compactenum}[(1)]
  \item $\Sites'\cup \Sites'' = \Sites\setminus\{s_1,s_k\}$ and $\Ports' \cup \Ports'' = \Ports\setminus\{p_1,p_k\}$,
  \item the sites of $\Sites'$ lie in $\Contour'$ or on $K$ and the
    sites of $\Sites''$ lie in the interior of $\Contour''$,
  \item the ports of $\Ports'$ lie on the boundary of $\Contour'$ and
    the ports of $\Ports''$ lie on the boundary of $\Contour''$.
\end{compactenum}
Note that the sites $s_1$, $s_k$ and the ports $p_1$, $p_k$
  neither belong to $\Instance'$ nor to $\Instance''$, because they
  are already used by the fixed labels $\ell_1$ and $\ell_k$.  We
call $(\ell_1,\ell_k,K)$, which defines the polyline $K'$, the
\emph{separator} of $\Contour'$ and $\Contour''$. For two
sub-instances we say that they are \emph{disjoint} if the interiors of
their regions are disjoint.

In the following, we only consider sub-instances in which the convex
chain $K$ lies to the right of the line~$l$ through $s_1$ and $s_k$
pointing towards $s_k$ from $s_1$; we will show that these are sufficient for
decomposing any instance.
Put differently, the chain $K$ is a
convex part of the boundary of $\Contour'$ and a concave part of the
boundary of $\Contour''$. We call $\Instance'$ a \emph{convex}
sub-instance and $\Instance''$ a \emph{concave} sub-instance.

The line $l$ splits $\Contour''$ into three regions $A_1$, $A_2$ and
$A_3$; see Fig.~\ref{fig:basic-instances}(a). Let $A_2$ be the region
to the right of $l$ and let $A_1$ and $A_3$ be the regions to the left
of $l$ such that $A_1$ is adjacent to the leader of $\ell_1$ and $A_3$
is adjacent to the leader of $\ell_k$. Depending on the choice of
$\ell_1$ and $\ell_k$, the regions $A_1$ and $A_3$ may or may not exist. We
call $\Contour''$ the \emph{exterior} of $\Contour'$ and vice
versa.  We distinguish the following convex
instances.

\begin{compactenum}[(A)]
\item A convex instance has type A if there is a site $s\in A_1$ such
  that $\ell_1$ and the half-line~$h$ emanating from $s$ through $s_1$
  separates $K$ from the sites in $\Contour''$; see
  Fig~\ref{fig:basic-instances}(b).

\item A convex instance has type B if there is a site $s\in A_3$ such
  that $\ell_k$ and the half-line~$h$ emanating from $s$ through $s_k$
  separates $K$ from the sites in $\Contour''$; see
  Fig~\ref{fig:basic-instances}(c).
\end{compactenum}
For both types the chain $K$ is uniquely defined by the choice of
  $\ell_1$, $\ell_k$ and~$s$. Thus, type A and type B instances are
  uniquely defined by $\ell_1$, $\ell_k$ and $s$; we denote these
  instances by $\Instance_\TB[\ell_1,\ell_k,s]$ and
  $\Instance_\TC[\ell_1,\ell_k,s]$, respectively. We call $s$ the
\emph{support point} of the instance. In case that $\Contour''$ is
empty, the chain $K$ is already uniquely defined by $\ell_1$ and
$\ell_k$ and we write $\Instance_\TB[\ell_1,\ell_k,\bot]$ and
$\Instance_\TC[\ell_1,\ell_k,\bot]$.  Hence, we can enumerate all such
instances by enumerating all possible triples consisting of two labels
and one site. Since each label is defined by one port and one site, we
obtain $O(|\Sites|^3\dot|\Ports|^2)$ instances in total.
Note that instances of type B are symmetric to type A instances. 

For $k=2$ the chain consists of the sites $s_1$ and $s_2$ and
the support point is superfluous; such	 an instance is solely defined by
the labels $\ell_1$ and $\ell_2$ of $s_1$ and $s_2$, respectively.
We call these instances \emph{capstone instances} and denote them by
$\Instance_\IC[\ell_1,\ell_2]$; see
Fig.~\ref{fig:basic-instances}(d). 

We now show that any labeling can be composed into instances of these
three types.  We say that an instance is \emph{empty} if its set
$\Sites$ of sites is empty.  For a labeling $\mathcal L$ of an
instance $\Instance$ we write $\mathcal L\restriction_{\Instance'}$
for the labeling $\mathcal L'\subseteq \mathcal L$ that is restricted
to the sites and ports of the sub-instance~$\Instance'$ of
$\Instance$. Let $(\ell_1,\ell_k,K)$ denote the separator of
$\Instance'$. For a labeling~$\mathcal L'$ of a sub-instance
$\Instance'$ we require that any leader of any label~$\ell$ in
$\mathcal L'$ lies in the contour of the sub-instance not intersecting
the separator (the sites $s_2,\dots,s_{k-1}$ are excluded from this
restriction). In that case we say that~$\ell$ is contained in
$\Instance'$.
We further emphasize that the labels $\ell_1$ and $\ell_k$ are
contained in $\mathcal L'$, but they are fixed describing the contour
of~$\Instance'$ so that their sites and ports do not belong to the
sites and ports of $\Instance'$, respectively.  The next lemma states
how to decompose type A and type B instances into smaller type A, type
B and capstone instances. Fig.~\ref{fig:decomposition}(a) illustrates
Case~(\ref{complex-instances:typeA}).
Case~(\ref{complex-instances:typeB}) is symmetric to
Case~(\ref{complex-instances:typeA}).

\begin{figure}[t]
 \centering
 \includegraphics[page=2,width=\linewidth]{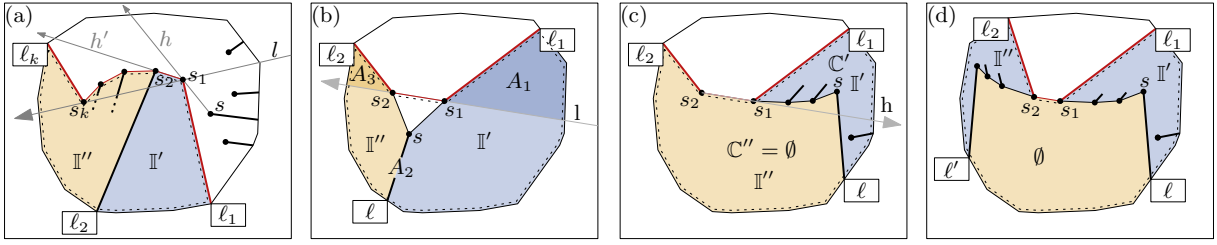}
 \caption{Decomposition of a convex instance~$\Instance$ (dashed
   polygon). (a)~Type A instance with $k>2$ is decomposed into
   capstone instance~$\Instance'$ and type A
   instance~$\Instance''$. (b)~Capstone instance is decomposed into
   two smaller capstone instances. (c)--(d)~Capstone instance is
   decomposed into type A and type B instances. }
   \label{fig:decomposition}
\end{figure}

\begin{lemma}\label{lem:complex-instances}
  Let $\Instance$ be a convex instance with separator
  $(\ell_1,\ell_k,K=(s_1,\dots,s_k))$ and $k>2$. Further, let
  $\mathcal L$ be a planar labeling of $\Instance$.

  \begin{compactenum}[(i)]
  \item\label{complex-instances:typeA} If $\Instance$ has type $A$, the label $\ell_2\in \mathcal L$
    of $s_2$ splits $\Instance$ into the disjoint sub-instances
    $\Instance'=\Instance_\IC[\ell_1,\ell_2]$ and
    $\Instance''=\Instance_\TB[\ell_2,\ell_k,s_1]$ such that any
    label~$\ell \in \mathcal L$ is contained in~$\Instance'$ or~$\Instance''$.
    \[c(\mathcal L) = c(\mathcal L\restriction_{\Instance_\IC[\ell_1,\ell_2]})+c(\mathcal L\restriction_{\Instance_\TB[\ell_2,\ell_k,s_1]})-c_1(\ell_2)\] 
  \item\label{complex-instances:typeB} If $\Instance$ has type $B$, the label $\ell_{k-1}\in \mathcal
    L$ of $s_{k-1}$ splits $\Instance$ into the two disjoint
    sub-instances $\Instance'=\Instance_\IC[\ell_{k-1},\ell_k]$ and
    $\Instance''=\Instance_\TC[\ell_1,\ell_{k-1},s_k]$ such that any
    label~$\ell \in \mathcal L$ is contained in $\Instance'$ or~$\Instance''$.
      \[c(\mathcal L) = c(\mathcal L\restriction_{\Instance_\IC[\ell_{k-1},\ell_{k}]})+c(\mathcal L\restriction_{\Instance_\TC[\ell_1,\ell_{k-1},s_k]})-c_1(\ell_{k-1})\] 
  \end{compactenum}
\end{lemma}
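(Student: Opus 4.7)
I would prove only case~(\ref{complex-instances:typeA}); case~(\ref{complex-instances:typeB}) follows by a mirror argument that exchanges the roles of $(\ell_1,s_1)$ and $(\ell_k,s_k)$. Let $\ell_2$ denote the label of $s_2$ in $\mathcal L$. The strategy is first to pin down where $\ell_2$ lies, then to verify that the two claimed sub-instances are well-defined and cover $\mathcal L$ disjointly, and finally to read off the cost identity.

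\emph{Locating $\ell_2$.} The type~A certificate of $\Instance$ gives a support point $s\in A_1$ such that $\ell_1$ together with the half-line from $s$ through $s_1$ separates $K$ from the remaining sites in~$\Contour''$. By planarity of $\mathcal L$ and Lemma~\ref{lem:shadow-free}, the leader of $\ell_2$ cannot cross $\ell_1$ or $\ell_k$; combined with the convexity of $K$, this forces $\ell_2$ to lie entirely inside the region bounded by the separator of $\Instance$ and its port~$p_2$ to lie on $\partial\Contour$ strictly between $p_1$ and $p_k$ in radial order.

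\emph{Producing the sub-instances.} The capstone $\Instance'=\Instance_\IC[\ell_1,\ell_2]$ only requires $\ell_1$ and $\ell_2$ to be non-intersecting, which holds by planarity. For $\Instance''=\Instance_\TB[\ell_2,\ell_k,s_1]$ I must verify three things: (a)~the chain $(s_2,\dots,s_k)$ is convex, which is immediate as a sub-chain of~$K$; (b)~$s_1$ lies in the $A_1$-region of $\Instance''$, which follows because the convexity of $K$ puts $s_1$ on the left of the line through $s_2$ and $s_k$, and the previous step places it on the $\ell_2$-adjacent side; (c)~$\ell_2$ together with the half-line from $s_1$ through~$s_2$ separates $(s_2,\dots,s_k)$ from every site lying exterior to~$\Instance''$. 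Part~(c) is the technical core: I would transport the original separating half-line (through $s_1$, anchored at $s$) into a new one (through $s_2$, anchored at $s_1$), using convexity of~$K$, Assumption~\ref{assumption:convex-hull} on the convex hull of~$\Sites$, and the fact that any stray site violating~(c) would force a leader of $\mathcal L$ to cross either $\ell_2$ or the edge $\overline{s_1 s_2}$, contradicting planarity via Lemma~\ref{lem:shadow-free}.

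\emph{Covering $\mathcal L$ and the cost identity.} The interiors of $\Instance'$ and $\Instance''$ are disjoint; they share only the polyline formed by $\ell_1$, the edge $\overline{s_1s_2}$, and $\ell_2$. Every site of $\Sites\setminus\{s_1,s_2,s_k\}$ therefore lies in exactly one sub-region, and by Lemma~\ref{lem:shadow-free} its label's leader cannot cross $\ell_1,\ell_2$ or $\ell_k$, so the entire label stays in that sub-region. Consequently every label of~$\mathcal L$ other than $\ell_2$ contributes its $c_1$ cost to exactly one of $\mathcal L\restriction_{\Instance'}$ and $\mathcal L\restriction_{\Instance''}$, while $\ell_2$ contributes to both as a shared boundary label, which is exactly the correction $-c_1(\ell_2)$. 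For $c_2$, the pair $(\ell_1,\ell_2)$ lies inside $\Instance'$, the pair $(\ell_2,\ell_3)$ (with $\ell_3$ the radial successor of $\ell_2$) lies inside $\Instance''$, and every other consecutive pair is internal to exactly one sub-instance; summing yields the claimed identity. The main obstacle I anticipate is step~(c) above: carefully converting the type~A separating half-line anchored at $s$ into one anchored at $s_1$, which is where the interplay between convexity of $K$, the planarity of $\mathcal L$, and Assumption~\ref{assumption:convex-hull} really does the work.
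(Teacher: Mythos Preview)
Your plan is correct and matches the paper's proof: split $\Instance$ by the leader $\ell_2$, identify the two pieces as the capstone $\Instance_\IC[\ell_1,\ell_2]$ and the type-A instance $\Instance_\TB[\ell_2,\ell_k,s_1]$, and then read off the cost identity from the double count of $\ell_2$.

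The only place you diverge is step~(c), which you flag as the technical core; in the paper it is a one-line observation. Convexity of $K$ alone suffices: since $s_1,s_2,\dots,s_k$ are convex, the half-line $h'$ from $s_1$ through $s_2$ is rotated inward relative to the original half-line $h$ from $s$ through $s_1$, so $\ell_2\cup h'$ still separates $K'=(s_2,\dots,s_k)$ from every site exterior to $\Instance''$. Your planned appeal to planarity of $\mathcal L$ and Lemma~\ref{lem:shadow-free} is not needed here, and in fact cannot cover all cases: sites lying in the exterior of the \emph{original} $\Instance$ have no label in $\mathcal L$, so there is no leader to which a crossing argument could apply. Drop that machinery and argue~(c) purely geometrically from the convexity of~$K$.
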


\begin{proof}
  We only argue for type A instances. Symmetric arguments hold for
  type B instances. Consider an arbitrary sub-instance
  $\Instance=(\Contour,\Sites,\Ports)$ of type A; see
  Fig.~\ref{fig:decomposition}(a).  Since $\ell_2$ connects two points
  of~$\Contour$'s boundary, it partitions $\Instance$ into two
  sub-instances~$\Instance'$ and~$\Instance''$ with labelings
  $\mathcal L\restriction_{\Instance'}$ and $\mathcal
  L\restriction_{\Instance''}$ such that any label of $\mathcal
  L\setminus\{\ell_2\}$ either is contained in~$\Instance'$
  or~$\Instance''$. Let $\Instance'$ be the instance containing $s_1$
  and $\Instance''$ the other one.  Obviously, $\Instance'$ forms the
  capstone instance $\Instance_\IC[\ell_1,\ell_2]$. We now show that
  $\Instance''$ forms
  instance~$\Instance''=\Instance_\TB[\ell_2,\ell_k,s_1]$ of type A.

  By definition of $\Instance$ the label~$\ell_1$ and the
  half-line~$h$ emanating from $s$ through $s_1$ separates the convex
  chain $K$ of $\Instance$ from the sites in the exterior of
  $\Instance$.  Because of the convexity of $K$, the half-line~$h'$
  emanating from $s_1$ through $s_2$ and the label $\ell_2$ separate
  the convex chain $K'=(s_2,\dots,s_k)$ from the sites in the exterior
  of $\Instance''$. Hence,
  $\Instance''=\Instance_\TB[\ell_2,\ell_k,s_1]$ has type A.

  The cost of $\mathcal L$ is composed of the costs of $\mathcal
  L\restriction_{\Instance'}$ and $\mathcal
  L\restriction_{\Instance''}$ as follows
  $c(\mathcal L) = c(\mathcal
  L\restriction_{\Instance_\IC[\ell_1,\ell_2]})+c(\mathcal
  L\restriction_{\Instance_\TB[\ell_2,\ell_k,s_1]})-c_1(\ell_2)
  $.
  To not count $\ell_2$ twice, we
  subtract~$c_1(\ell_2)$.
\end{proof}

The convex sub-instances $\Instance'$ and $\Instance''$ of the
previous lemma contain fewer sites than $\Instance$. Thus, recursively
applying Lemma~\ref{lem:complex-instances} decomposes $\Instance$ into
a set of type A and type B instances until all instances are capstone
instances. The next lemma states how to decompose capstone
  instances into smaller type A, type B and capstone
  instances. Fig.~\ref{fig:decomposition}(b) illustrates
  Case~(\ref{capstone:triangle}), Fig.~\ref{fig:decomposition}(c)
  illustrates Case~(\ref{capstone:typeA}), Case~(\ref{capstone:typeB})
  is symmetric to Case~(\ref{capstone:typeA}) and
  Fig.~\ref{fig:decomposition}(d) illustrates
  Case(\ref{capstone:typeAB}).

\begin{lemma}\label{lem:capstone}
  Let $\Instance=\Instance_\IC[\ell_1,\ell_2]$ be a capstone instance
  with separator $(\ell_1,\ell_2,K=(s_1,s_2))$. Further, let $\mathcal L$
  be a planar labeling of $\Instance$. One of the following statements
  applies for $\Instance$.

  \begin{compactenum}[(i)]
  \item \label{capstone:empty}The instance~$\Instance$ is empty. 
    \[c(\mathcal L)=c_1(\ell_1)+c_1(\ell_2)+c_2(\ell_1,\ell_2)\]
   \item \label{capstone:triangle}There is a label~$\ell$ in $\mathcal
    L$ such that any label in
    $\mathcal L$ is contained in one of the two disjoint capstone instances
    $\Instance_\IC[\ell_1,\ell]$ and $\Instance_\IC[\ell,\ell_2]$.
    \[c(\mathcal L) = c(\mathcal L\restriction_{\Instance_\IC[\ell_{1},\ell]})+c(\mathcal L\restriction_{\Instance_\IC[\ell,\ell_2]})-c_1(\ell)\] 
  \item \label{capstone:typeA}There is a label~$\ell\in \mathcal
    L$ s.t.\ any label in
    $\mathcal L$ is contained in
    $\Instance_\TB[\ell_1,\ell,s_2]$.
   \[
     c(\mathcal L) = c(\mathcal L\restriction_{\Instance_\TB[\ell_1,\ell,s_2]})+c_2(\ell,\ell_2)
   \]
  \item \label{capstone:typeB}There is a label~$\ell \in \mathcal
    L$ s.t.\ any label in
    $\mathcal L$ is contained in
    $\Instance_\TC[\ell,\ell_2,s_1]$.
     \[
     c(\mathcal L) = c(\mathcal L\restriction_{\Instance_\TC[\ell,\ell_2,s_1]})+c_2(\ell_1,\ell)
   \] 
 \item \label{capstone:typeAB}There are labels~$\ell,\ell'\in \mathcal
   L$ with $\ell\neq \ell'$ s.t.\ any label in $\mathcal L$ is contained in
   either $\Instance_\TB[\ell_1,\ell,s_2]$ or
   $\Instance_\TC[\ell',\ell_2,s_1]$.
     \[
     c(\mathcal L) = c(\mathcal
     L\restriction_{\Instance_\TB[\ell_1,\ell,s_2]})+c(\mathcal
     L\restriction_{\Instance_\TC[\ell',\ell_2,s_1]})+c_2(\ell,\ell')
   \]  
  \end{compactenum}

\end{lemma}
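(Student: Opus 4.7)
The plan is to perform a case analysis on the structure of a planar staircase labeling $\mathcal L$ of the capstone instance $\Instance=\Instance_\IC[\ell_1,\ell_2]$, identifying in each configuration a witness label (or pair of labels) whose leader(s) decompose $\Instance$ into smaller sub-instances of the three allowed types. The base case is immediate: if $\Instance$ contains no interior sites, then $\mathcal L=\{\ell_1,\ell_2\}$ with $\ell_1,\ell_2$ consecutive in the radial order, which establishes case~(i) directly by unfolding the cost definition.

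For a non-empty capstone I would identify two distinguished interior labels: $\ell^+\in\mathcal L$ whose port immediately follows $p_1$ in the radial order and $\ell^-\in\mathcal L$ whose port immediately precedes $p_2$, with sites $s^+$ and $s^-$, respectively. The central geometric objects are the half-line $h^+$ emanating from $s_1$ through $s^+$ and the half-line $h^-$ emanating from $s_2$ through $s^-$. Together with the leaders of $\ell_1$ and $\ell_2$, these half-lines delineate the candidate separators that would realize the type~B instance $\Instance_\TC[\ell^+,\ell_2,s_1]$ and the type~A instance $\Instance_\TB[\ell_1,\ell^-,s_2]$. By planarity (Lemma~\ref{lem:shadow-free}) no leader of $\mathcal L$ crosses these candidate separators, so every interior label can be classified according to which side of $h^+$ and $h^-$ its site lies on.

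The case distinction then proceeds as follows. If $\ell^+=\ell^-$, then this single interior label $\ell:=\ell^+$ partitions $\Instance$ into two empty capstones $\Instance_\IC[\ell_1,\ell]$ and $\Instance_\IC[\ell,\ell_2]$, giving case~(ii). Otherwise, I split according to whether $h^+$ and $h^-$ actually isolate all interior sites on the desired sides of the capstone: if both do, case~(v) applies with $\ell:=\ell^-$ and $\ell':=\ell^+$; if only one does (while the other side is degenerate and reaches the opposite boundary label directly), case~(iii) or case~(iv) applies; if neither does, I argue by sweeping the interior labels in radial order and invoking the staircase criterion~T\ref{crit:shadowing} that some intermediate label $\ell$ with site $s$ must have its leader cleanly split $\Instance$ through the triangle $s_1,s,s_2$ into two smaller capstones, yielding case~(ii) again.

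The main obstacle I anticipate is the geometric verification that the chosen witnesses actually satisfy the definitional conditions of type~A, type~B, and capstone sub-instances, in particular the convex-chain property, the placement of the support point in $A_1$ or $A_3$, and the half-line separability from the exterior sites. This verification leans on Lemma~\ref{lem:shadow-free}, Assumption~\ref{assumption:convex-hull}, Assumption~\ref{assumption:independent-chains}, and the staircase criterion. Once the structural decomposition is secured, the cost equations follow by additively unfolding $c(\mathcal L)$ along the radial sequence: in case~(ii) the shared pivot $\ell$ contributes $c_1(\ell)$ to both sub-capstone costs, explaining the $-c_1(\ell)$ correction; in cases~(iii) and~(iv) the $c_2$-cost of the transition between the sub-instance's bounding label and the opposite bounding label of the capstone is not covered by the sub-labeling and must be added explicitly; case~(v) combines both adjustments at the interface between the type~A and type~B sub-instances.
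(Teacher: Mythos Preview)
Your proposal organizes the case analysis quite differently from the paper. The paper does not anchor on the radial extremes $\ell^+,\ell^-$; instead it splits the capstone's interior by the line $l$ through $s_1$ and $s_2$ into three regions $A_1,A_2,A_3$, where $A_1$ and $A_3$ are the wedges on the far side of $l$ flanking the leaders of $\ell_1$ and $\ell_2$, respectively. It then \emph{first} asks whether some site $s$ admits a separating triangle $\Delta(s_1,s,s_2)$ whose interior is not crossed by any label (case~(ii)); if no such triangle exists, it infers that $A_1\cup A_3$ must be nonempty, and the three remaining configurations (only $A_1$ nonempty, only $A_3$ nonempty, both nonempty) give cases~(iii), (iv), (v) directly. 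The witness label for case~(iii) is chosen as the radially last label among those with \emph{site in $A_1$}, and symmetrically for the others. This partition is tailor-made for the type~A/B definitions: sites in $A_1$ (resp.\ $A_3$) are exactly the ones that force a nontrivial convex chain on the $\ell_1$-side (resp.\ $\ell_2$-side), so the support-point and half-line conditions are immediate.

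There is a concrete mismatch in your plan. The half-lines you introduce are not the ones appearing in the type~A/B definitions: for $\Instance_\TB[\ell_1,\ell,s_2]$ the defining half-line emanates from the support point $s_2$ through the \emph{first} chain site $s_1$, not from $s_2$ through the site of $\ell$; similarly for type~B. Your $h^\pm$ therefore test the wrong separation condition, and ``both half-lines isolate'' does not certify that the two sub-instances in case~(v) are valid type~A and type~B instances. Moreover, while your radial extremes $\ell^-,\ell^+$ happen to coincide with the paper's witnesses in cases~(iii) and~(iv) (once one already knows $A_3$ resp.\ $A_1$ is empty), in case~(v) the paper's two witnesses are the radially last label with site in $A_1$ and the radially first with site in $A_3$; these need not equal $\ell^-$ and $\ell^+$ when $A_2$ also contains sites. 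Finally, your ``if neither does, sweep and invoke T\ref{crit:shadowing}'' fallback is where the paper's argument is most economical: once no separating triangle exists, the $A_1/A_3$ dichotomy immediately delivers the right witnesses, with no sweep and no appeal to the staircase criterion.
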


\begin{proof}
  If $\Instance=(\Contour,\Sites,\Ports)$ is empty, the cost of
  $\mathcal L$ are composed by $c(\mathcal
  L)=c_1(\ell_1)+c_2(\ell_2)+c_2(\ell_1,\ell_2)$, which yields~Case
  (\ref{capstone:empty}).

  So assume that $\Instance$ is not empty. The line $l$ through~$s_1$
  and $s_2$ splits $\Contour$ into three regions $A_1$, $A_2$
  and~$A_3$; see Fig.~\ref{fig:decomposition}(b). Let~$A_2$ be the
  region to the left of $l$ (going along $l$) and let $A_1$ and~$A_3$
  be the regions to the right of $l$ such that $A_1$ and $A_3$ are
  adjacent to the leaders of~$\ell_1$ and~$\ell_2$,
  respectively. Further, let $p_1$ and $p_2$ be the ports of~$\ell_1$
  and~$\ell_2$, respectively.
  
  First assume that there is a site~$s$ with label $\ell\in\mathcal L$
  such that the \emph{separating triangle} $\Delta(s_1,s,s_2)$ lies in
  $\Contour$ and its interior is not intersected by any label in
  $\mathcal L$; the triangle must lie in $A_2$. Together with $\ell$,
  it partitions~$\Instance$ into the two sub-instances $\Instance'$
  and $\Instance''$. They form the capstone instances
  $\Instance'=\Instance_\IC[\ell_1,\ell]$ and
  $\Instance''=\Instance_\IC[\ell,\ell_2]$ as in Case~(\ref{capstone:triangle}). It is easy to see that $c(\mathcal L) =
  c(\mathcal L\restriction_{\Instance_\IC[\ell_{1},\ell]})+c(\mathcal
  L\restriction_{\Instance_\IC[\ell,\ell_2]})-c_1(\ell).$ We subtract
  $c_1(\ell)$ to not count $\ell$ twice.

  So assume that there is no such separating triangle, which implies
  that $A_1 \cup A_3$ contains sites. If this were not the case, $A_2$
  would contain a site, because $\Instance$ is not empty. Among these
  sites we easily could find a site forming a separating triangle. In
  particular due to Assumption~1 such a triangle cannot be
  intersected by any text box.  We distinguish three cases how~$A_1$
  and $A_3$ contain sites.

  \textit{Case: $A_1$ contains a site, but $A_3$ is empty.} Let $s$ be a site in
  $A_1$. We denote its label by $\ell$ and the port of $\ell$ by
  $p$. We choose $s$ and $\ell$ such that no label of any site in $A_1$
  succeeds $\ell$ in the radial ordering. Let $K$ be the convex chain
  of sites in $A_1$ such that $K$ starts at $s_1$, ends at $s$, and
  any site in $A_1$ lies in the counterclockwise oriented
  polygon~$\Contour'$ defined by $p_1$, $K$, $p$ and the part of
  $\Contour$ in between $p$ and $p_1$; see
  Fig.~\ref{fig:decomposition}(c). All sites in $A_2$ also lie in
  $\Contour'$, because otherwise we could find a separating
  triangle. Hence, the clockwise oriented polygon~$\Contour''$
  defined by $p_2$, $s_2$, $s_1$, $K$, $p$ and the part of $\Contour$
  in between $p$ and $s_2$ does not contain any site. By the choice of
  $\ell$ all labels must be contained in the instance induced by
  $\Contour'$. In particular the half-line~$h$ emanating from $s_2$
  through $s_1$ separates $K$ from the sites in the exterior of
  $\Instance'$. Hence $\Instance'$ is the type A instance
  $\Instance_\TB[\ell_1,\ell,s_2]$. This yields Case~(\ref{capstone:typeA}) and 
  $
    c(\mathcal L) = c(\mathcal L\restriction_{\Instance_\TB[\ell_1,\ell,s_2]})+c_2(\ell,\ell_2).
  $

  \textit{Case: $A_3$ contains a site, but $A_2$ is empty.} Symmetrically, we
  construct a type B instance
  $\Instance'=\Instance_\TC[\ell,\ell_2,s_1]$, which yields
  Case~(\ref{capstone:typeB}).

    \textit{Case: Both $A_1$ and $A_2$ contain sites.} Combining the
    construction of the previous two cases, we obtain two labels
    $\ell$ and $\ell'$ that define the type A and type B instances
    $\Instance'=\Instance_\TB[\ell_1,\ell,s_2]$ and $\Instance''=\Instance_\TC[\ell',\ell_2,s_1]$ of
    case (\ref{capstone:typeAB}); see Fig.~\ref{fig:decomposition}(d).
\end{proof}

Lemma~\ref{lem:complex-instances} and Lemma~\ref{lem:capstone}
describe how to decompose an arbitrary convex sub-instance $\Instance$
into a set of empty capstone
instances. The next lemma implies that any labeling of any instance
$\Instance$ of \ContourLabeling can be decomposed into empty capstone
instances. 

\begin{lemma}\label{lem:base-case}
  Let $\Instance$ be an instance of
  \ContourLabeling and let $\mathcal L$ be a planar labeling of
  $\Instance$. The first leader $\ell$ and the last leader $\ell'$ in
  the radial ordering of $\mathcal L$ define a type A instance
  $\Instance'=\Instance_\TB[\ell,\ell',\bot]$ such that the exterior
  of $\Instance'$ is empty and
   $
   c(\mathcal L) = c(\mathcal
   L\restriction_{\Instance'})+c_2(\ell,\ell').
   $  
\end{lemma}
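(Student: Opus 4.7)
The plan is to construct the chain $K$ of the asserted type A instance explicitly, verify it yields a valid separator with empty exterior, check that all of $\mathcal L$ lies inside the induced sub-instance, and then read off the cost identity. Let $s,s'$ and $p,p'$ be the sites and ports of $\ell$ and $\ell'$, and let $\gamma$ be the arc of $\partial\Contour$ from $p'$ clockwise through the topmost point of $\Contour$ back to $p$; by the definition of first and last in the radial ordering, no port other than $p,p'$ lies on $\gamma$. Let $T$ be the region of $\Contour$ bounded by $\ell$, $\gamma$, $\ell'$, and the chord $\overline{ss'}$, and define $K=(s_1,\dots,s_k)$, with $s_1=s$ and $s_k=s'$, as the upper convex hull of $\{s,s'\}\cup(\Sites\cap T)$, i.e.\ the convex chain from $s$ to $s'$ on the side of $\gamma$. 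A short check shows that the directed line $s\to s'$ has $\gamma$ on its right in the paper's convention (using that $p$ is first and $p'$ last clockwise from the topmost point), so $K'=(p,s_1,\dots,s_k,p')$ is a well-formed separator and $\Instance':=\Instance_\TB[\ell,\ell',\bot]$ is a valid type A instance whose interior $\Contour'$ and exterior $\Contour''$ are as usual, with $\Contour''$ containing the topmost point.

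I would then check that $\Contour''$ is empty of sites and that every label of $\mathcal L$ is contained in $\Instance'$. For the first, any site strictly inside $\Contour''$ would lie in $T$ and strictly above $K$, contradicting the upper-hull construction. For the second, the fixed labels $\ell,\ell'$ belong to the separator; consider any other label $\ell_j$ with site $s_j$ and leader $\lambda_j$. Then $s_j\in\overline{\Contour'}$ by the first item, and $p_j\in\partial\Contour\setminus\gamma\subseteq\overline{\Contour'}$. Suppose for contradiction that $\lambda_j$ enters $\Contour''$. The interior of the straight segment $\lambda_j$ lies in the interior of $\Contour$ and cannot meet $\gamma\subseteq\partial\Contour$, while planarity forbids $\lambda_j$ from crossing $\ell$ or $\ell'$, so every entry or exit of $\Contour''$ by $\lambda_j$ can only occur across $K$. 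The key geometric fact I would invoke is that $K$ together with the chord $\overline{ss'}$ bounds the convex polygon $P=\mathrm{conv}\{s_1,\dots,s_k\}$, which lies on the $\Contour'$ side of $K$; hence for any straight line the sub-segment between two intersections with $K$ lies inside $P\subseteq\overline{\Contour'}$, never in $\Contour''$. Consequently $\lambda_j$ cannot both enter and leave $\Contour''$ across $K$, and a single $K$-crossing would force the endpoint $p_j$ of $\lambda_j$ into $\overline{\Contour''}$, impossible since $\overline{\Contour''}\cap(\partial\Contour\setminus\gamma)\subseteq\{p,p'\}$ and $p_j\neq p,p'$. Thus $\lambda_j\subseteq\overline{\Contour'}$ and $\ell_j$ is contained in $\Instance'$.

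Finally, $\mathcal L\restriction_{\Instance'}$ contains every label of $\mathcal L$, including the fixed labels $\ell$ and $\ell'$; its cost sums $c_1$ over all labels together with $c_2$ over consecutive pairs in the non-cyclic radial ordering of $\Instance'$, which runs from $\ell$ through the intermediate labels to $\ell'$. The only summand of $c(\mathcal L)$ this misses is the wrap-around pair of the full cyclic ordering, which contributes exactly $c_2(\ell,\ell')$, yielding the stated identity. The hardest step I anticipate is the convexity argument ruling out a traversal of $\Contour''$ by $\lambda_j$: it crucially uses that $K$ is the upper convex hull so that $P$ is truly convex, and it has to handle the degenerate case where $s_j$ is itself a vertex of $K$ and $\lambda_j$ therefore starts on the separator itself.
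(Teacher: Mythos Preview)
Your proposal is correct and follows essentially the same approach as the paper: both construct the chain $K$ as the convex hull (minus the chord $\overline{ss'}$) of the sites lying in the region cut off by the polyline $(p,s,s',p')$ on the side of the topmost point, and both use that $\ell,\ell'$ being first and last in the radial ordering forces all other ports onto the opposite arc. The paper's proof is considerably terser—it stops after exhibiting $K$ and asserting that the resulting object is the desired type~A instance—whereas you additionally spell out why the exterior is empty, why every leader stays in $\Instance'$, and why the cost identity holds; these are exactly the points the paper leaves implicit.
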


\begin{figure}
 \begin{minipage}[b]{0.49\linewidth}
   \centering
   \includegraphics[page=3]{./fig/instances.pdf}
   \caption{Illustration of Lemma~\ref{lem:base-case}}
   \label{fig:base-case}
 \end{minipage}
 \hfill
 \begin{minipage}[b]{0.49\linewidth}
   \centering
   \includegraphics[page=5]{./fig/instances.pdf}
   \caption{Illustration of Bundles.}
   \label{fig:bundles}
 \end{minipage}
\end{figure}

\begin{proof}
  Let~$\Instance =(\Contour,\Sites,\Ports)$. Further, let~$s$ and $s'$ be the sites and let $p$ and $p'$ be the ports of the two labels
  $\ell$ and $\ell'$, respectively. The polyline $(p,s,s',p')$
  partitions~$\Contour$ into two sub-polygons; see
  Fig.~\ref{fig:base-case}. Let $\Contour'$ be the
  counterclockwise oriented polygon and $\Contour''$ be the clockwise
  oriented polygon. Let $\Sites''$ be the sites in $\Contour''$. The
  port of any label in $\mathcal L$ lies on the boundary
  of~$\Contour'$, because otherwise $\ell$ and $\ell'$ would not be
  the first and last labels in the radial ordering of $\mathcal L$,
  respectively. Hence, any leader of any label with site in $\Sites''$
  must intersect the line segment $\overline{ss'}$. This in particular
  implies that any of these sites must lie to the right of the line
  $l$ that goes through $s$ and $s'$ in that direction. Let
  $H$ be the convex hull of $\Sites''\cup\{s,s'\}$. Removing
  $\overline{ss'}$ from $H$, we obtain the desired convex chain $K$, which
  lies to the right of~$l$. Together with $\ell$ and $\ell'$ it forms
  the type A instance $\Instance_\TB[\ell,\ell',\bot]$.
\end{proof}

\subsection{Dynamic Programming}\label{sec:dyn-prog}
Applying the results of the previous section we present a dynamic
programming approach that solves \ContourLabeling with fixed ports
optimally.  For type A, type B and capstone instances the approach
creates the three tables $T_\TB$, $T_\TC$ and $T_\IC$ storing the
optimal costs of the considered instances, respectively.  We call an
instance \emph{valid} if the two labels $\ell_1$ and $\ell_k$ defining
the separator do not intersect and comply with~T\ref{crit:shadowing}.

\textsc{Step~1.}~We compute all valid instances of type~A and type~B,
and all valid capstone instances.

\textsc{Step~2.}~We compute the optimal costs for all convex
sub-instances.  Let $\Instance$ be the currently considered instance
of size $i\geq 0$ with separator $(\ell_1,\ell_k,K=(s_1,\cdots,s_k))$,
where the \emph{size} of $\Instance$ is the number of sites contained in~$\Instance$; recall that $s_1$ and $s_k$ do not belong to~$\Instance$.
Considering the instances in non-decreasing order of their sizes, we
can assume that we have already computed the optimal costs for all
convex instances with size less than $i$. We distinguish the two main
cases $k=2$ and $k>2$.

\textbf{Case} $\mathbf{k=2.}$ The instance~$\Instance$ forms the capstone instance
$\Instance_\IC[\ell_1,\ell_2]$. Let~$s_1$ and $s_2$ be the sites
of~$\ell_1$ and $\ell_2$, respectively. Following
Lemma~\ref{lem:capstone} we apply five steps.

\begin{compactenum}[(1)]
\item If $\Instance_\IC[\ell_1,\ell_2]$ is not empty, we set $w_1:=\infty$ and otherwise 
  \[w_1:=c_1(\ell_1)+c_1(\ell_2)+c_2(\ell_1,\ell_2).\]
\item We consider every site~$s$ in $\Instance_\IC[\ell_1,\ell_2]$ such that
  the \emph{separating triangle}~$\Delta(s_1,s_2,s)$ lies in the
  region of~$\Instance$ and does not contain any other site. For all
  such sites we determine each label candidate $\ell$ that partitions
  $\Instance$ into valid disjoint capstone instances
  $\Instance_\IC[\ell_1,\ell]$ and $\Instance_\IC[\ell,\ell_2]$. Let
  $D$ denote the set of all those labels.
  \[w_2 := \min_{\ell\in D}
  T_\IC[\ell_1,\ell]+T_\IC[\ell,\ell_2]-c_1(\ell).\]
\item We determine every label $\ell$ of $\Instance_\IC[\ell_1,\ell_2]$ such
  that every site of $\Instance$ lies in
  $\Instance_\TB[\ell_1,\ell,s_2]$. Let $D$ denote the set of those
  labels.  
  \[w_3 := \min_{\ell\in D} T_\TB[\ell_1,\ell,s_2]+c_2(\ell,\ell_2).\]
\item We determine every label $\ell$ of $\Instance_\IC[\ell_1,\ell_2]$ such
  that every site of $\Instance$ lies in
  $\Instance_\TC[\ell,\ell_2,s_1]$. Let $D$ denote the set of those
  labels. 
  \[w_4 := \min_{\ell\in D} T_\TC[\ell,\ell_2,s_1]+c_2(\ell_1,\ell).\]
\item We determine every pair $(\ell,\ell')$ of intersection-free labels
  of $\Instance_\IC[\ell_1,\ell_2]$ such that every site of
  $\Instance[\ell_1,\ell_2]$ lies in $\Instance_\TB[\ell_1,\ell,s_2]$
  or $\Instance_\TC[\ell',\ell_2,s_1]$. Let $D$ denote the set of
  those pairs labels.  \[w_5 := \min_{(\ell,\ell')\in D}
  T_\TB[\ell_1,\ell,s_2]+T_\TC[\ell',\ell_2,s_1]+c_2(\ell,\ell').\]
\end{compactenum}
In any of the above cases we choose the labels of the candidate set
$D$ such that the considered instances are valid. Further, if $D$ is
empty in sub-step ($i$), we set $w_i:=\infty$ ($2\leq i \leq 5$).  We set
$T_\IC[\ell_1,\ell_2] := \min_{1\leq i\leq 5}\{w_i\}$.

\textbf{Case} $\mathbf{k>2.}$ Following Lemma~\ref{lem:complex-instances} we distinguish two
sub-cases. 
If $\Instance$ is a type $A$ instance $\Instance_\TB[\ell_1,\ell_k,s]$ (where possibly $s=\bot$), we determine every possible
label $\ell$ for site $s_2$.  Let $D$ be the set of those labels. If $D$ is empty, we set 
$T_\TB[\ell_1,\ell_k,s]:=\infty$ and otherwise
\[T_\TB[\ell_1,\ell_k,s] := \min_{\ell \in D} T_\IC[\ell_1,\ell] +
T_\TB[\ell,\ell_k,s_1]-c_1(\ell).\] If $\Instance$ has type B,
we analogously define $D$ for $s_{k-1}$. If $D$ is empty, we set $T_\TC[\ell_1,\ell_k,s]:=\infty$ and otherwise
\[T_\TC[\ell_1,\ell_k,s] := \min_{\ell \in D} T_\IC[\ell,\ell_{k}] +
T_\TC[\ell_1,\ell,s_{k}]-c_1(\ell).\]
In any of the cases above we call the elements in~$D$ the
\emph{descendants} of the given instance.

\textsc{Step 3.} After computing the optimal costs of all convex
instances, we enumerate all possible choices $(\ell,\ell')$ of first
and last labels in the radial ordering. Let $D$ denote the set of
those choices.  By Lemma~\ref{lem:base-case} any choice $(\ell,\ell')
\in D$ forms a convex type A instance
$\Instance'=\Instance_\TB[\ell,\ell',\bot]$.  Hence, the optimal
cost~$\OPT(\Instance)$ of $\Instance$ are
\[
\OPT(\Instance) = \min_{(\ell,\ell') \in D}
T_\TB[\ell,\ell',\bot]+c_2(\ell,\ell').
\]
Recall that if the convex chain $K$ of $\Instance'$ has length 2, then
we have $\Instance'=\Instance_\IC[\ell,\ell']$ and therefore
$T_\TB[\ell,\ell',\bot]=T_\IC[\ell,\ell']$.  If $\OPT(\Instance) =
\infty $, we return that $\Instance$ does not admit a contour
labeling.

\textsc{Step 4.} Also storing the according descendants for each instance, we apply a standard
backtracking approach for dynamic programming to obtain the corresponding 
labeling~$\mathcal L$.

\begin{theorem}
  For an instance
  $\Instance=(\Contour,\Sites,\Ports)$
  contour labeling, the problem \ContourLabeling can be solved in
  $O(|\Sites|^4|\Ports|^4)$ time.
\end{theorem}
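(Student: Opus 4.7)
The plan is to verify separately that the described four-step dynamic program computes an optimal labeling and that it runs in the claimed time.

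For correctness, I would argue by induction on the number of sites contained in a convex sub-instance, showing that after Step~2 the entries of $T_\TB$, $T_\TC$, and $T_\IC$ are exactly the optimal costs of planar staircase contour labelings compatible with the fixed separator. The base case is an empty capstone handled by sub-step $w_1$, whose value matches Lemma~\ref{lem:capstone}(\ref{capstone:empty}) exactly. For the inductive step I would invoke Lemma~\ref{lem:complex-instances} to conclude that every planar labeling of a type-A (resp.\ type-B) instance with $k>2$ is forced to decompose through the label of $s_2$ (resp.\ $s_{k-1}$), so it suffices to try all candidates for that label; analogously, Lemma~\ref{lem:capstone} exhaustively enumerates the five shapes a planar labeling of a non-empty capstone can take, one per sub-step $w_1$--$w_5$. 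Lemma~\ref{lem:shadow-free} ensures that enforcing separator non-intersection, staircase-compliance, and leader-disjointness during the recursion is sufficient for global planarity. Step~3 is then justified by Lemma~\ref{lem:base-case}, which lifts any planar labeling of the full instance to a type-A instance with empty exterior, indexed by its first and last labels in the radial order; minimizing over those pairs yields $\OPT(\Instance)$, and Step~4 is a routine backtrack.

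For the running time, I would first count the valid sub-instances. A label is fixed by a (site, port) pair, so there are $O(|\Sites|\cdot|\Ports|)$ distinct labels. Indexing a type-A or type-B instance by a triple $(\ell_1,\ell_k,s)$ and a capstone by a pair $(\ell_1,\ell_2)$ gives $O(|\Sites|^3|\Ports|^2)$ complex instances and $O(|\Sites|^2|\Ports|^2)$ capstones. Filling a type-A or type-B cell requires iterating over one splitting label $\ell$, i.e., $O(|\Sites|\cdot|\Ports|)$ candidates, contributing $O(|\Sites|^4|\Ports|^3)$ in total. Filling a capstone cell is dominated by sub-step $w_5$, which iterates over pairs $(\ell,\ell')$ of labels, contributing $O(|\Sites|^2|\Ports|^2)$ per capstone and $O(|\Sites|^4|\Ports|^4)$ overall; this dominates Steps~1, 3, and 4 and yields the theorem bound.

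The main obstacle will be establishing that each DP iteration really runs in $O(1)$. A naive implementation of predicates such as ``the separating triangle $\Delta(s_1,s_2,s)$ contains no other site'' or ``every remaining site of the capstone lies in $\Instance_\TB[\ell_1,\ell,s_2]\cup\Instance_\TC[\ell',\ell_2,s_1]$'' would cost $\Theta(|\Sites|)$ and inflate the bound by a factor of $|\Sites|$. I would therefore precompute, in time well within the $O(|\Sites|^4|\Ports|^4)$ budget, lookup tables that answer the needed queries on constant-size configurations of $\Sites\cup\Ports$: separating-triangle emptiness on triples of sites, wedge and half-plane containment on site triples, and pairwise label-label intersection as well as staircase-compliance on label pairs. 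With these tables in place, every validity and containment check arising in Step~2 reduces to a constant number of lookups, closing the running-time analysis.
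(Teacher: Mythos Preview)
Your proposal is correct and follows essentially the same approach as the paper's own proof, invoking Lemmas~\ref{lem:shadow-free}--\ref{lem:base-case} for correctness and bounding Step~2 by the capstone case~$w_5$ for the running time. One minor tightening: for a type-A (resp.\ type-B) instance the splitting site is already fixed as $s_2$ (resp.\ $s_{k-1}$) on the chain~$K$, so you only iterate over $O(|\Ports|)$ ports rather than $O(|\Sites|\cdot|\Ports|)$ labels---this does not change the final bound, and your explicit treatment of the $O(1)$ validity checks via precomputed tables is in fact more careful than the paper, which simply precomputes the contained site set of every sub-instance in Step~1 and leaves the constant-time lookups implicit.
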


\begin{proof}
  We first show the correctness and then argue 
  the running time. Let $\Instance$ be an arbitrary instance
  of \ContourLabeling.

  \textit{Correctness.}  Let~$\mathcal L$ be a labeling constructed by
  our algorithm. We first show that $\mathcal L$ is planar proving
  the conditions of Lemma~\ref{lem:shadow-free}.

  By construction we ensure that for a convex instance $\Instance$ the
  labels of $\Instance$'s separator comply with
  T5. In particular this implies that any two
  consecutive labels in $\mathcal L$ comply with
  T5, which implies that $\mathcal L$ is a
  staircase labeling.  When computing the descendants $D$ for an instance,
  we ensure that $D$ does not contain any label that
  intersects the separator of that instance.  By induction this
  implies that no leader in $\mathcal L$ intersects any other leader
  in~$\mathcal L$. Further, we ensure that no two
  consecutive labels intersect.  Hence, by Lemma~\ref{lem:shadow-free}
  the labeling~$\mathcal L$ is planar.

  Finally, we prove that $\mathcal L$ is an optimal labeling of
  $\Instance$. By
  Lemma~\ref{lem:complex-instances},~\ref{lem:capstone}
  and~\ref{lem:base-case} any labeling can be recursively decomposed
  into type A, type~B, and capstone instances. Our algorithm enumerates
  all these instances. For each such instance~$\Instance'$ we search
  for the label(s) $\ell$ ($\ell'$) that split $\Instance'$ into
  smaller type A, type B and capstone instances. Since we enumerate
  all such labels, we also find the label minimizing the cost
  of $\Instance'$. 
 
  \textit{Running Time.} We now analyze the running time step by step.
  
  \textsc{Step 1.} We create all possible convex instances of type A and
  type~B, and all capstone instances by (conceptually) enumerating all
  tuples~$(\ell_1,\ell_2)$ and~$(\ell_1,\ell_2,s)$, where~$\ell_1$ and
  $\ell_2$ are labels based on the ports and sites of $\Instance$ and
  $s\in S$. Since each label is defined by a site and a port, we
  enumerate $O(|\Sites|^3\cdot |\Ports|^2)$ tuples. For each instance
  we also compute the convex chain $K$ and the sites contained in the
  instance, which needs $O(|\Sites|)$ time assuming that the
    sites are sorted by their $x$-coordinates.  Sorting the
  instances by size needs $O(|\Sites|^3\cdot |\Ports|^2 (\log |\Sites|
  + \log|\Ports|))$ time.
 
  \textsc{Step 2.} Handling a single capstone instance we need
  $O(|\Sites|^2 |\Ports|^2)$ time: For the cases (1)--(4) there are
  $O(|\Sites|\cdot |\Ports|)$ descendants, while for case (5) there
  are $O(|\Sites|^2 \cdot |\Ports|^2)$ descendants. Since we consider
  $O(|\Sites|^2\cdot|\Ports|^2)$ many instances, we obtain
  $O(|\Sites|^4|\Ports|^4)$ running time in total for capstone
  instances.

  Handling a single type A or type B instance there are $O(|\Ports|)$
  descendants, because the site of the descendants is fixed. Since we
  consider $O(|\Sites|^3\cdot |\Ports|^2)$ such instances, we obtain
  $O(|\Sites|^3|\Ports|^3)$ running time, which is dominated by handling the capstone instances.  %

  \textsc{Step 3.} Enumerating all pairs of first and last labels in
  the radial ordering can be done in $O(|\Sites|^2|\Ports|^2)$~time.

  \textsc{Step 4.} Storing for each instance its descendant of lowest
  cost, we can do backtracking in linear~time.
  
  \noindent Altogether, Step 2 dominates  with
  $O(|\Sites|^4|\Ports|^4)$ running time.
\end{proof}

The criteria mentioned in Section~\ref{sec:criteria} can be easily
patched into the approach. If a criterion should become a hard
constraint not to be violated, we simply exclude any sub-instance
violating this specific criterion. For example, to enforce
monotonicity (L\ref{crit:monotone}) we remove any sub-instance whose defining labels violate
this criterion. Similarly, if the criterion should become a
soft-constraint, we do not exclude the sub-instance, but include its compliance with this criterion into its
cost---provided that it can be modeled by the cost functions $c_1$ or $c_2$. This is true for all criteria listed in Section~\ref{sec:criteria}.  

\section{Algorithm Engineering}\label{sec:engineering}
Initial experiments showed that a naive implementation of the dynamic
programming approach does not yield reasonable running times, which matches the high asymptotic running times.
In this section, we therefore describe how the approach can be
implemented efficiently to prevent these problems for instances of
realistic input sizes of $|S| \le 70$.

\subsection{Bundling}
Instead of considering each possible label individually, we \emph{bundle}
labels and redefine the different types of instances based on bundles
instead of single labels.  More precisely, consider two labels
$\ell_1$ and $\ell_2$ of the same site $s$ such that $\ell_1$ precedes
$\ell_2$ in the radial ordering; see
Fig.~\ref{fig:bundles}. Let $p_1$ and $p_2$ be the ports of
$\ell_1$ and $\ell_2$, respectively. Further, let~$R$ be the region
enclosed by $\ell_1$, $\ell_2$ and the part of the
contour~$\Contour$ that lies between $p_1$ and $p_2$ (going in
clockwise direction around $\Contour$).  We call $\ell_1$ and
$\ell_2$ \emph{equivalent} if the region~$R$ does not contain any
site. Hence, $\ell_1$ can be \emph{slid} along $\Contour$ to $\ell_2$
without passing any other site.  A \emph{bundle} of $s$ is a set~$B$
of labels of~$s$ that are pairwise equivalent.  A set $B$ is
\emph{maximal} if there is no bundle $B'$ with $B\subsetneq B'$. We order
the labels according their radial ordering. To ease the description we assume without loss of
generality that there is no bundle that contains the top point
of~$\Contour$; we can \emph{split} bundles at the top point
of~$\Contour$, if necessary.

As presented in Section~\ref{sec:struc-properties} we describe a
sub-instance $\Instance_\TB[\ell,\ell',t]$
($\Instance_\TC[\ell,\ell',t]$, $\Instance_\IC[\ell,\ell']$) by two
labels $\ell$ and $\ell'$ and by a support point~$t$. We now
generalize this concept to bundles. For two bundles $B$ and $B'$ of
two sites $s$ and $s'$ the \emph{instance set}
$\InstanceSet_\TB[B,B',t]$ contains every instance
$\Instance_\TB[\ell,\ell',t]$ with $\ell\in B$ and $\ell'\in B'$. We
analogously define the sets $\InstanceSet_\TC$ and $\InstanceSet_\IC$.
Using these instance sets, we speed up the steps of our algorithm
without losing optimality as follows.

\textbf{\textsc{Step 1}.}  Instead of creating every instance
of every type, we create all possible instance sets $\InstanceSet_\TB$,
$\InstanceSet_\TC$ and $\InstanceSet_\IC$ based on the maximal bundles
of the input instance. We note that we do not compute the instance
sets explicitly, but each such set~$\InstanceSet$ is uniquely
described by the two labels that occur first and last in
$\InstanceSet$ with respect to the radial ordering.
Since any two instances of an instance set
contain the same sites, we only compute these contained sites once per
instance set. We further exclude any instance set that does not permit
a labeling at all. To that end, we test whether there is enough space
along the enclosed contour for all labels.

\textbf{\textsc{Step 2} \& \textsc{Step 3}.} Instead of computing the
optimal cost for each individual instance, we iteratively compute lower
and upper bounds of the optimal cost of the instance sets and refine
these until we obtain the optimal cost.

More precisely, in each iteration we compute for each instance
set~$\InstanceSet$ a lower bound $L$ and an upper bound~$U$ for the
optimal costs of the instances in $\InstanceSet$, i.e., for each
instance $\Instance\in \InstanceSet$ it holds $L \leq \OPT(\Instance)
\leq U$. To that end we define for two bundles $B$ and $B'$ the cost
functions $c_1$ and $c_2$ as follows.
\begin{align*}
  c_1(B)=&(\min_{\ell \in B}c_1(\ell),\max_{\ell \in B}c_1(\ell))\\
  c_2(B,B')=&(\min_{\ell \in B,\ell'\in B'}c_2(\ell,\ell'),\max_{\ell
    \in B,\ell'\in B'}c_2(\ell,\ell'))
\end{align*}
Interpreting the result of $c_1(B)$ and $c_2(B,B')$ as two-dimensional
vectors we compute the tables $T_\TB$, $T_\TC$ and $T_\IC$ in the same
manner as before, but this time we use instance sets instead of
instances and bundles instead of single labels.  We analogously adapt \textsc{Step 3}.

Hence, we obtain for each instance set lower and upper bounds. In
particular we obtain a lower bound $\bar L$ and an upper bound $\bar
U$ for the given input instance. In case that $\bar L=\infty$, the
input instance does not admit a labeling and we can abort the
algorithm. Otherwise, we collect all instance sets of \textsc{Step 3}
whose lower bound does not exceed $\bar U$ and start a standard
backtracking procedure to collect all instance sets whose lower bound
does not exceed $\bar U$. Any other instance set is omitted, because
it cannot be part of the optimal solution. Afterwards, we split each
bundle $B$ into two half-sized bundles $B_1$ and $B_2$, i.e., let
$\ell_1,\dots,\ell_h$ be the labels in $B$ with respect to the radial
ordering. We then set $B_1=\{\ell_1,\dots,\ell_m\}$ and
$B_2=\{\ell_{m+1},\dots,\ell_{h}\}$ where $m=\lfloor
\frac{h}{2}\rfloor$.  Thus, based on those new bundles we obtain new
corresponding instance sets. We start the next iteration with the
newly created instance sets.

We repeat this procedure until each bundle contains exactly
one label. Thus, the bounds $\bar L$ and $\bar U$ equal the
optimal cost of the input instance. Doing backtracking we construct
the according labeling.

\begin{figure}[t]
  \centering
  \includegraphics[page=8]{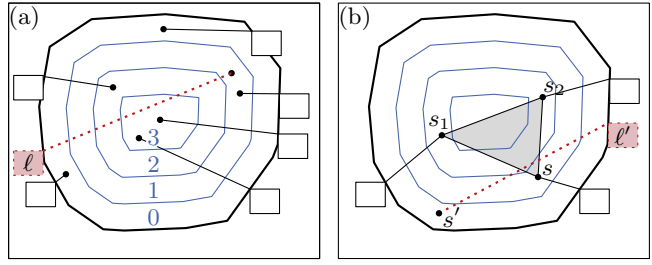}
  \caption{Shells (blue). (a) The label $\ell$ does not directly
    strive outwards and disturbs the overall appearance of the
    drawing. (b)~No label candidate~$\ell'$ of $s'$ can intersect the
    separating triangle~$\Delta(s_1,s_2,s)$ without intersecting a
    shell of higher level.}
  \label{fig:bundles-shells}
\end{figure}

\subsection{Shells}
We now describe two adaptations that further accelerate
the approach. In contrast to the previous section these techniques do
not necessarily preserve optimality. In
Section~\ref{sec:experiments} we show that in practice our heuristics
achieve near-optimal solutions in reasonable time.

 We observe that
leaders typically point outwards without passing through the
\emph{center} of the figure; see Fig.~\ref{fig:bundles-shells}(a).  This guarantees short leader lengths
and leaders fit in the overall appearance of the figure.  We use this
as follows. Based on the contour~$\Contour$ we construct a set of
offset polygons in the interior of~$\Contour$ such that they have a
pre-defined distance to each other; see
Fig.~\ref{fig:bundles-shells}(a). We call these polygons
\emph{shells}. The level of a shell is the number of shells containing
that shell and the level of a site is the level of the
innermost shell containing~$s$.

We require for every site that its leader does not intersect a shell
with higher level. Hence, we can exclude any label that violates this
requirement. The more shells are used the more labels are excluded
improving the running time. However, it also becomes more likely that
the optimal labeling gets lost. In our experiments we have chosen the
shells such that less than $0.9\%$ of the labels in handmade drawings
violate this property; see also Section~\ref{sec:experiments}.

We further speed up Step 2(2). To that end
let~$\Instance_\IC[\ell_1,\ell_2]$ be the currently considered
capstone instance and let~$s_1$ and $s_2$ be the sites of~$\ell_1$ and
$\ell_2$, respectively. Further, let $D$ be the set of descendants. We
only consider descendants that have a level that is at least as high
as the level of~$s_1$ and $s_2$. If such descendants do not exist,
  we take those with highest level.  In case that the shells are
convex and nested this particular adaption does not have any impact on
the achievable cost, because for any such site~$s$ the
triangle~$\Delta(s_1,s_2,s)$ is contained in the shell of~$s_1$, $s_2$
or $s$. It therefore cannot be intersected by any leader of a
descendant with a site that has a lower level; see Fig.~\ref{fig:bundles-shells}(b).

\newcommand{\scaleExamples}{0.3}

\begin{figure}[t]
  \centering
  \begin{tabular}{cc}
  \includegraphics[scale=\scaleExamples]{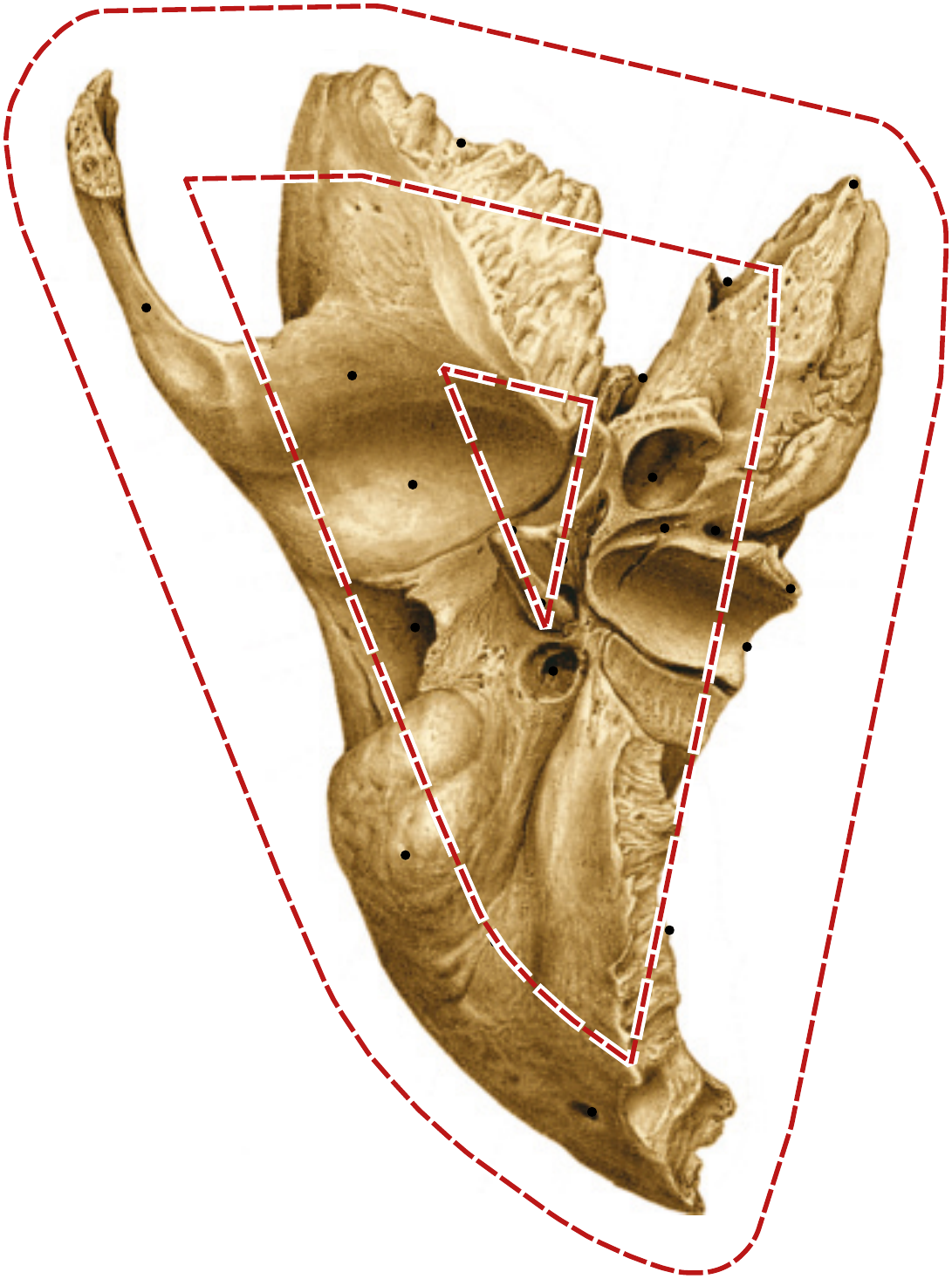}&    
 \includegraphics[scale=\scaleExamples]{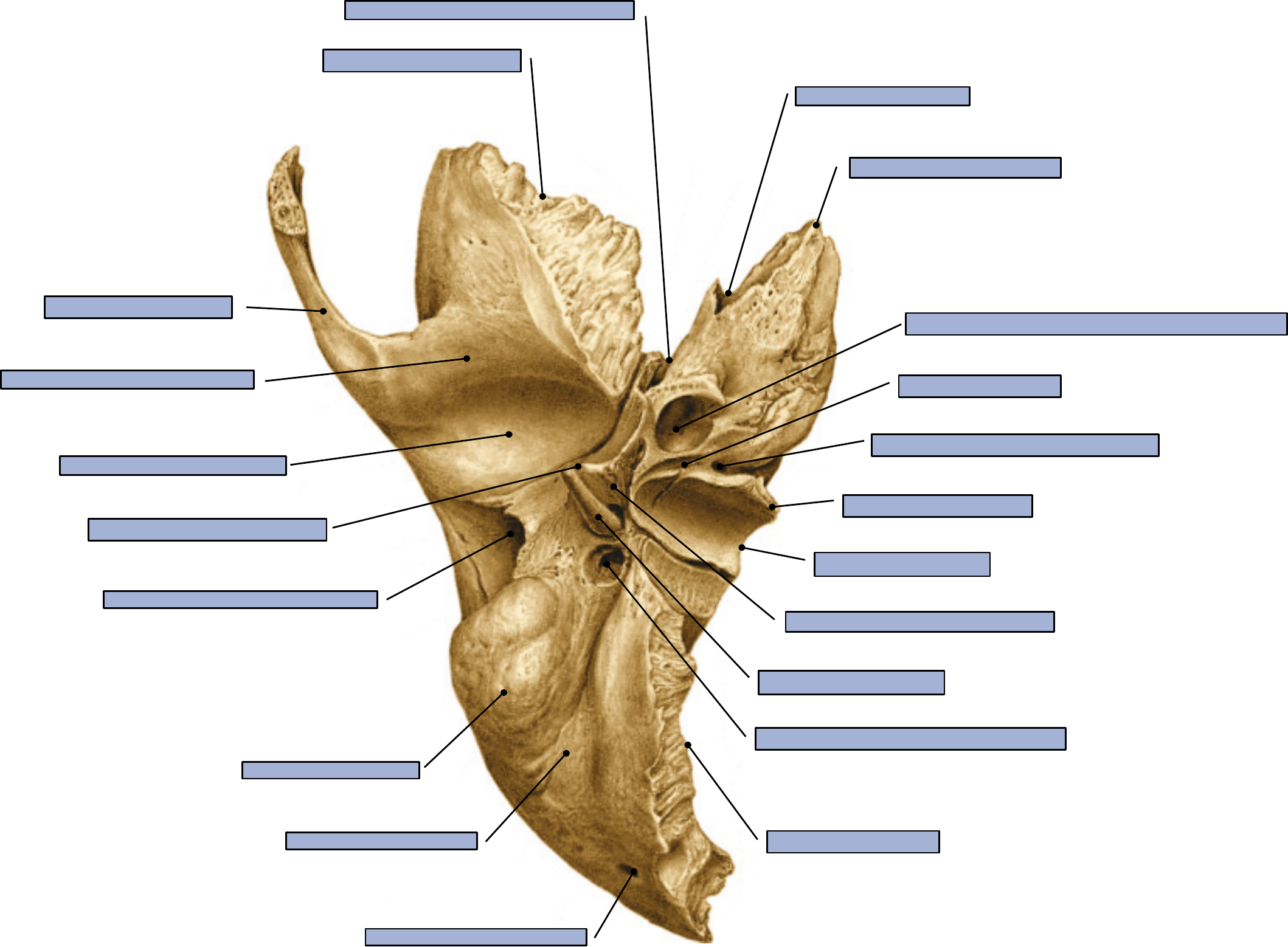}\\
  Contour \& Shells & Original \\ \\
 \includegraphics[scale=\scaleExamples]{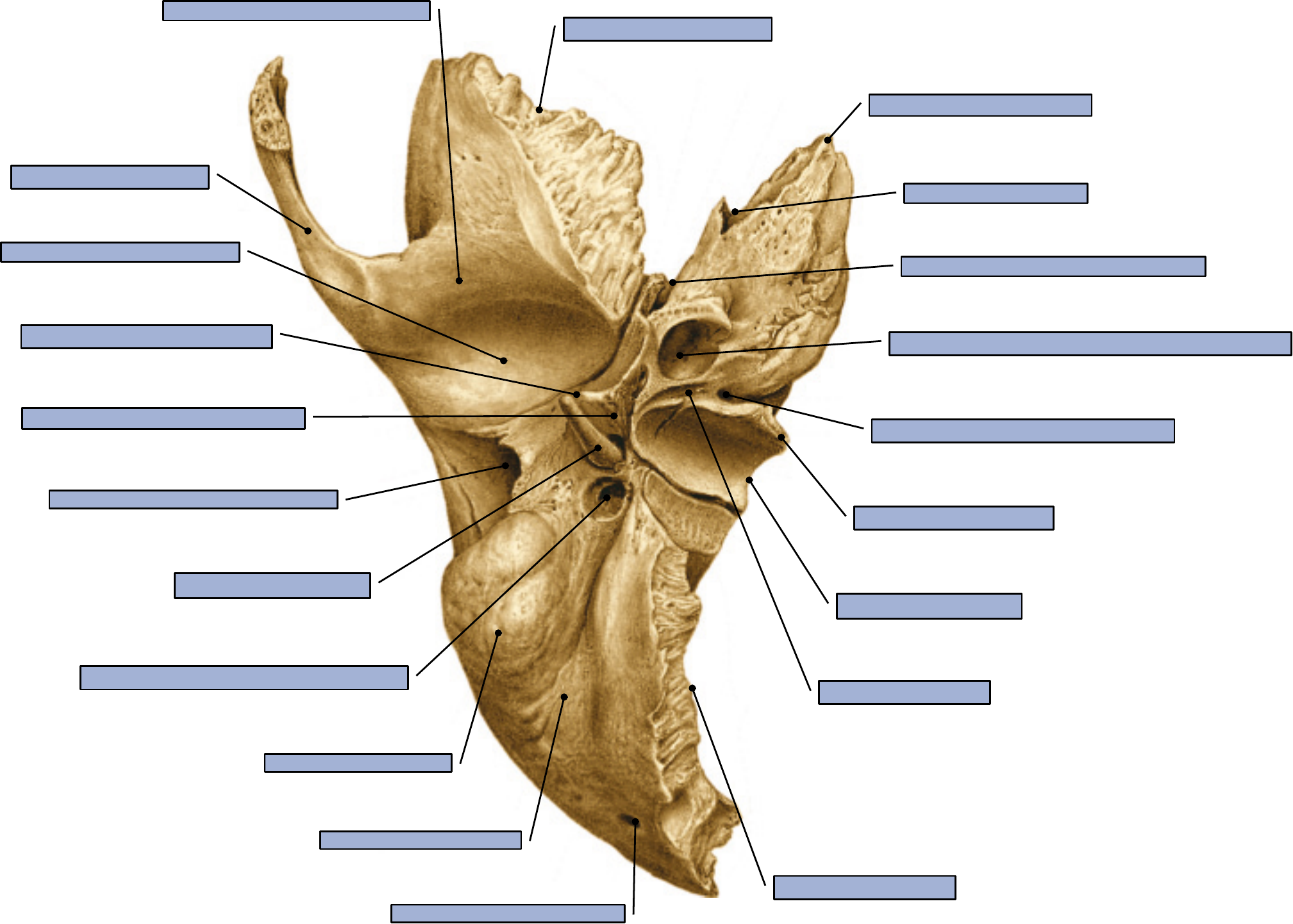}&
 \includegraphics[scale=\scaleExamples]{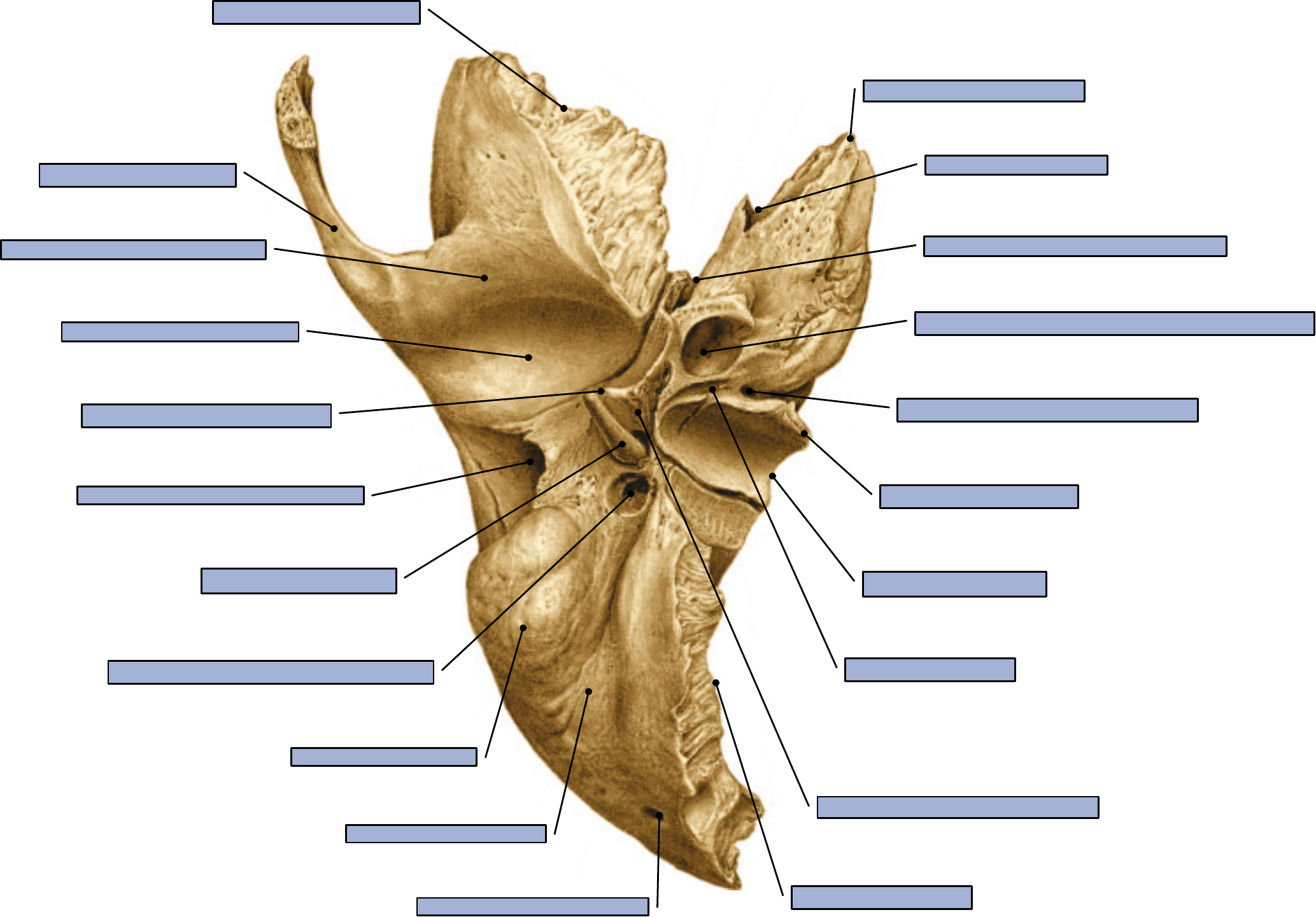}\\
  \OPT (10 sec.),  \CH (7 sec.) & \SCH (2 sec.), \TSCH (1 sec.) 
 \end{tabular}
 \caption{%
    \OPT and \CH as well as \SCH and \TSCH produced
   the same labelings, respectively. Source: Paulsen, Waschke, Sobotta
   Atlas Anatomie des Menschen, 23.Auflage 2010 \copyright Elsevier
   GmbH, Urban \& Fischer, München.}
  \label{fix:example:approaches}
\end{figure}

\subsection{Miscellaneous}
We can further speed-up the approach as follows.

\textsc{Simple-Instances.} Initial experiments showed that
non-capstone instances are more of theoretical interests proving the
optimality of the approach, but typically the optimal labeling can be
decomposed into capstone instances. Hence, it lends itself to only
consider capstone instances; particularly Step 2(3)--(5) are
omitted. The asymptotic running time remains the same,
  because handling capstone instances dominates the running time.

\textsc{One-Sided-Instances.} Assuming criterion~G\ref{crit:sides} we
can apply the following speed-up technique preserving the optimality
of our approach. Consider a capstone instance
$\Instance_\IC[\ell_1,\ell_2]$ such that both labels $\ell_1$ and
$\ell_2$ are right labels. Let~$D$ be the descendants computed in Step
2(2). Indeed we only need to consider the descendants in $D$ with the
leftmost site~$s$ among all those descendants. It preserves
optimality, because no descendant in $D$ of any other site can
intersect the separating triangle $\Delta(s_1,s_2,s)$; due to
criterion~G\ref{crit:sides} they all lie to the right of the vertical
line through $s$. Here $s_1$ and $s_2$ are the sites of $\ell_1$ and
$\ell_2$. Symmetrically, we can apply the same technique for left
labels $\ell_1$ and $\ell_2$ and the descendants with the rightmost
site among all descendants in~$D$.  The asymptotic running time
  remains the same, because handling capstone instances with left and
  right labels dominates the running time.

\textsc{Small-Triangles.} Computing the set~$D$ of descendants of a
capstone instance $\Instance_\IC[\ell_1,\ell_2]$ in Step 2(2), any
site $s$ in $\Instance_\IC[\ell_1,\ell_2]$ is considered that forms an
empty separating triangle $\Delta(s_1,s_2,s)$. Hereby $s_1$ and $s_2$
are the sites of $\ell_1$ and $\ell_2$. For this speed-up technique we
only consider the site~$s$ with smallest triangle~$\Delta(s_1,s_2,s)$
among those sites reducing the number of descendants in~$D$.
This improves the asymptotic running time by a factor of $O(n)$.

{
\newcommand{\FigureA}{./material/examples/chp09_fig040}

\begin{figure}
  \centering

  \begin{tabular}{cc}
  \includegraphics[scale=\scaleExamples]{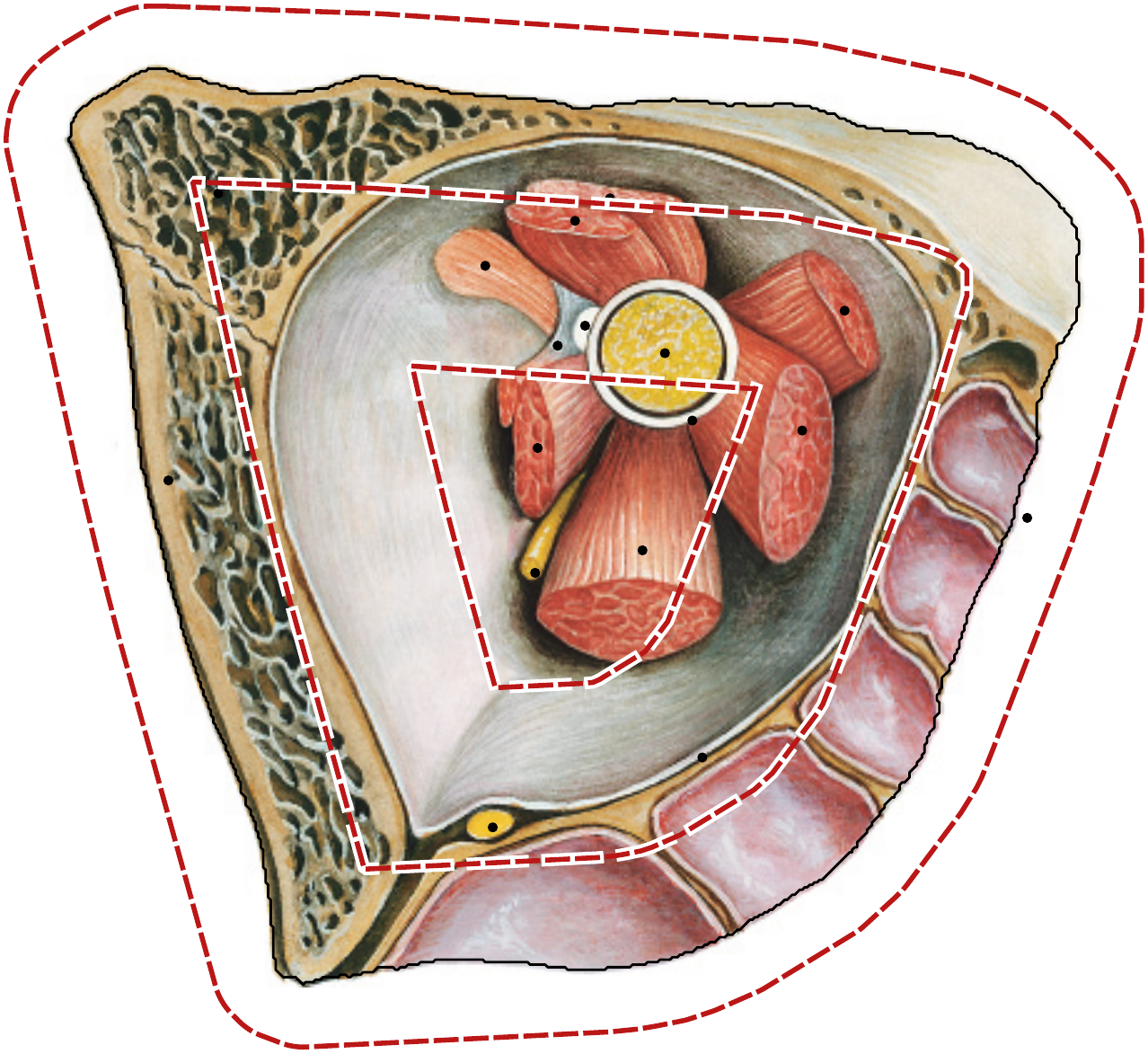}&    
 \includegraphics[scale=\scaleExamples]{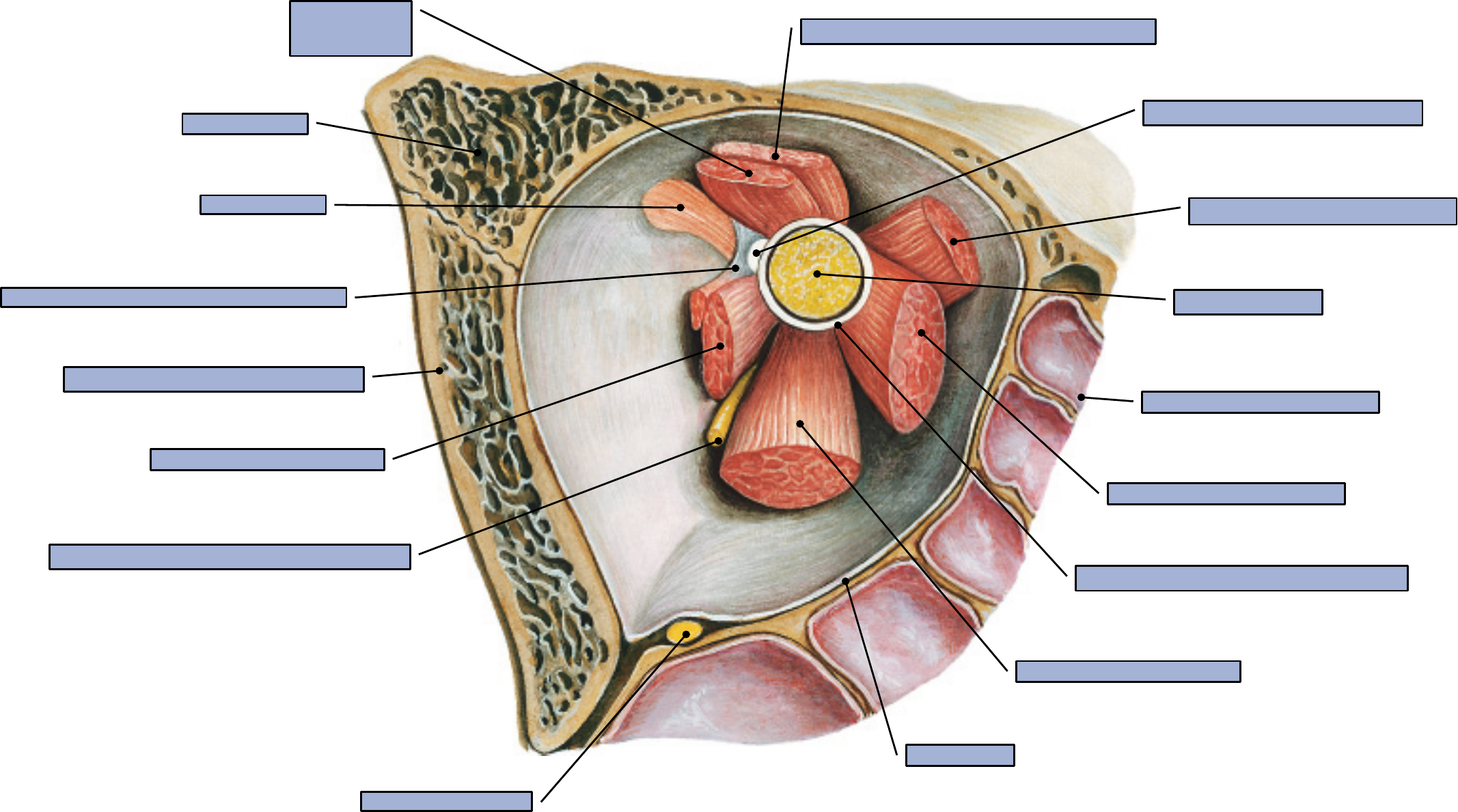}\\
  Contour \& Shells & Original \\ \\
 \includegraphics[scale=\scaleExamples]{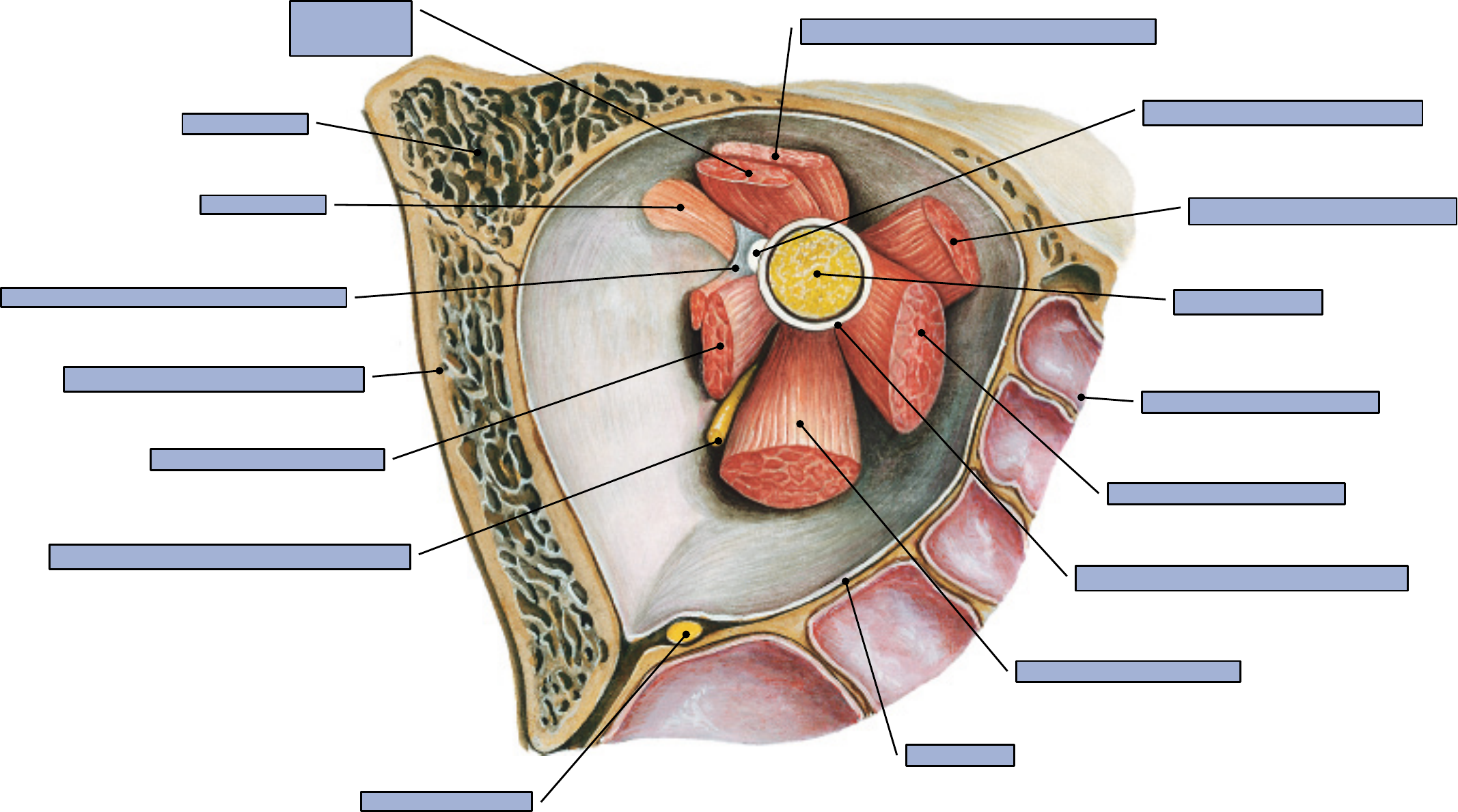}&
 \includegraphics[scale=\scaleExamples]{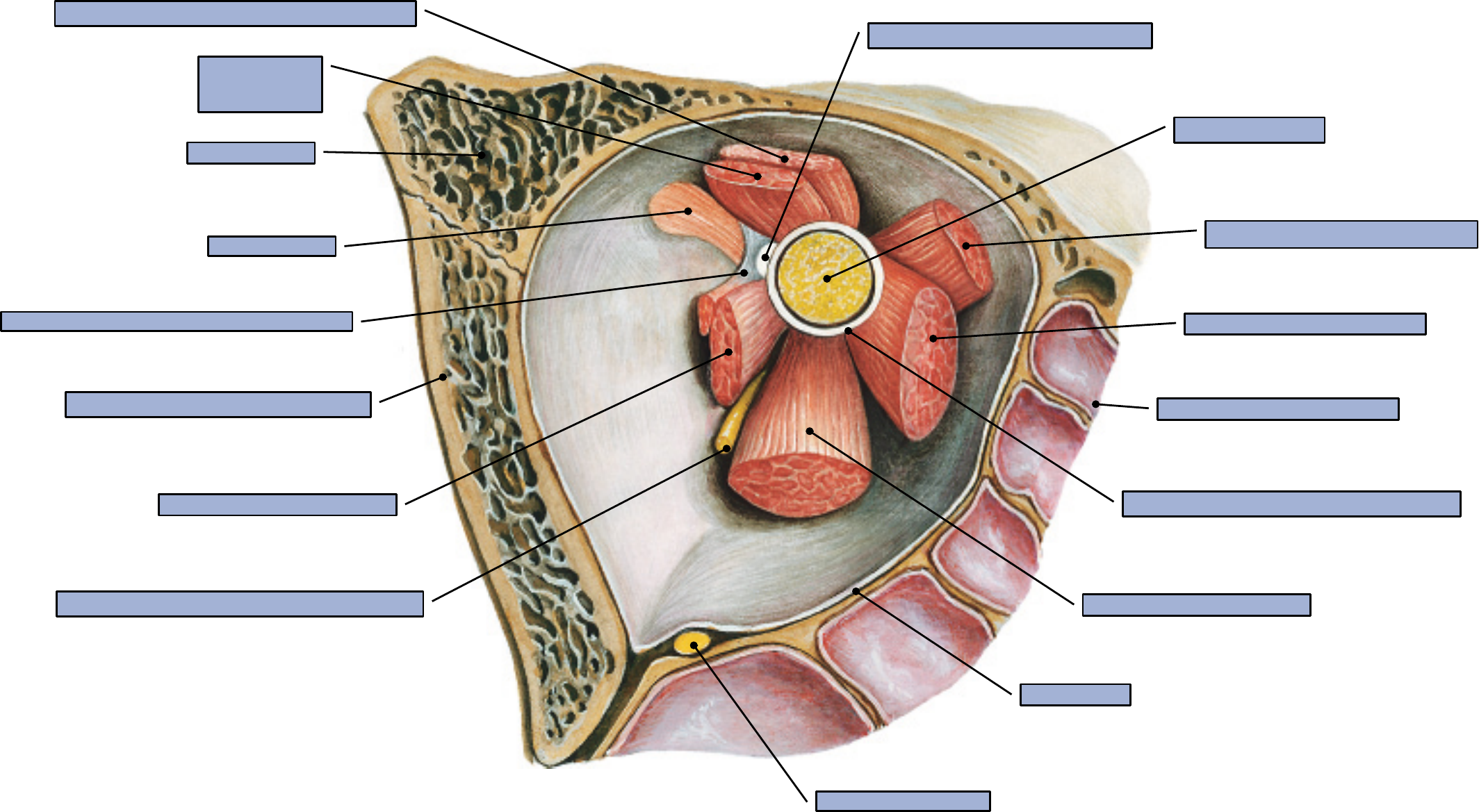}\\
   \OPT(10 sec.), \SCH (2 sec.) & \TSCH (1 sec.) 
 \end{tabular}
 \caption{\OPT, \CH and \SCH produced
   the same labelings. With an cost ratio of $10.2$, the labeling produced by \TSCH is an
   outlier; the labels distances are quite small. 
   Source: Paulsen, Waschke, Sobotta Atlas Anatomie des Menschen,
   23.Auflage 2010 \copyright Elsevier GmbH, Urban \& Fischer,
   München.}
  \label{fig:example:approachesB}
\end{figure}
}

\section{Experimental Evaluation}\label{sec:experiments}
We have implemented a prototype of our approach incorporating the
speed-up techniques of Section~\ref{sec:engineering}.  For simplicity
we constructed the contour of the figure based on its convex hull $H$,
i.e., the contour~$\Contour$ is an exterior offset polygon of $H$ with
distance of $25$ pixels. More sophisticated approaches can be
applied~\cite{Vollick2007}. Similarly, the shells are interior offset
polygons of the contour having distance of $70$ pixels to the next
shell. Both choices are ad-hoc values that mimic handmade drawings,
but a designer may select them depending on the actual
figure. Further, we placed ports by sampling the contour every 10
pixels.

\begin{figure}
 \centering
 \includegraphics{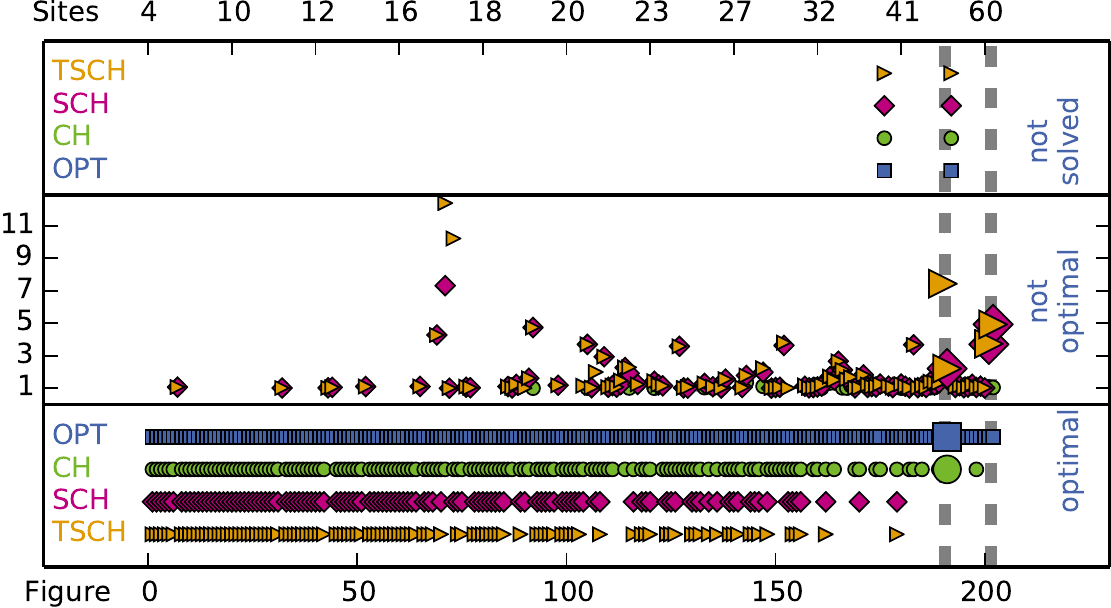}
 \caption{Quality. Each column represents one figure broken down into the labelings computed by the algorithm \OPT (blue rectangle), \CH (green disk), \SCH (pink diamond) or \TSCH (orange triangle). X-axis: The figures are sorted by their number of sites in increasing order. Y-axis: \emph{not solved}: Labelings that could not be constructed. \emph{not optimal:} Cost ratio of non-optimal labelings. \emph{optimal:} Any labeling $\mathcal L_A$ with $c(\mathcal L_A)=c(\mathcal L_\OPT)$. Hereby
$A \in \{\CH,\SCH,\TSCH,\OPT\}$.
   Symbols of labelings violating monotonicity (L\ref{crit:monotone})
   are enlarged and stabbed by a dashed vertical line.}
 \label{fig:quality:pd10}
\end{figure}

The implemented algorithm uses bundles, the speed-up of one-sided
instances, and parallelizes Step 1; see Sect.~\ref{sec:engineering}.
We distinguish the following variants of our algorithm:
\begin{compactenum}
\item \textsc{Optimal (\OPT)}: No further speed-up techniques.
\item \textsc{Capstone-Heuristic (\CH)}: Restricted to capstones.
\item \textsc{Simple-Shell-Heuristic (\SCH)}: Same as \CH, but
  also shells are applied.
\item \textsc{Triangle-Heuristic (\TSCH)}: Same as \SCH, but also including \textsc{Small-Triangles}. 
\end{compactenum}
The implementation was done in C++ and compiled with GCC
4.8.5 using optimization level O2. Further, unless specified otherwise, the experiments were performed on an Intel Xeon
E5-1630v3 processor clocked at 3.7 GHz, with 128 GB of RAM.

We used the 202 extracted medical drawings as input data.  For
reasonably defining the cost functions~$c_1$ and $c_2$, we first
created six labelings for each of five selected figures. These
labelings varied in the choice of minimum label distances, as well as
enforcing monotonicity (L\ref{crit:monotone}) or not. We
discussed these 30 labelings with a domain expert. She confirmed that
monotonicity is an important criterion to obtain balanced labelings
and rated the labelings best where labels with less than 30 pixels
distance were penalized. Smaller and larger penalty thresholds were
rated worse. Further, the expert emphasized that leaders should not
run past sites too closely.  Figures~\ref{fig:example},~\ref{fix:example:approaches} and~\ref{fig:example:approachesB} show three example illustrations
labeled with our algorithms. Further example illustrations are found on \url{http://i11www.iti.kit.edu/contourlabeling/}.

The analysis of handmade labelings also supports monotonicity:
$93.8\%$ of the labels satisfy this property. For about $70\%$ of the
test instances violating monotonicity, the violation was less than
$10^\circ$ in maximum. About $93\%$ of these instances have at most
$5$ violations.

We incorporated these findings into the cost function as follows. Let
$M$ be a big constant. Any label candidate and any instance with cost
at least $M$ is excluded. In our experiments we set $M=10^9$.  The
cost function $c_1$ takes the leader's length and the smallest
distance to sites into account. More precisely, any candidate label
whose leader is three times longer than the shortest possible leader
for the same site is excluded.  Hence, sites located close to the
contour of the figure have short leaders, while sites in the center of
the figure are not affected by this exclusion at all. Let $\ell$ be a
label candidate and $\lambda$ be the leader of $\ell$.  If the
distance~$d$ of $\lambda$ to any site (not connected to $\lambda$) is
less than 10 pixels, we set $c_1(\ell)=\length(\lambda)^2 +
M/(100\cdot d)$ and otherwise $c_1(\ell)=\length(\lambda)^2$. Hence,
we penalize both long leaders as well leaders that closely run past a
site.

We define $c_2$ as follows.
Let $\ell_1$ and $\ell_2$ be a pair of possible consecutive label
candidates. We set $c_2(\ell_1,\ell_2)=M$ if $\ell_1$ and $\ell_2$
violate monotonicity by more than $10^\circ$, excluding these
pairs. For smaller violations we set $c_2(\ell_1,\ell_2)=M/6+c_v$,
which effectively allows at most $5$ of these violations in total. If
$\ell_1$ and $\ell_2$ satisfy monotonicity, we set
$c_2(\ell_1,\ell_2)=c_v$. Here, $c_v$ is the cost caused by the
vertical distance $d_v$ of $\ell_1$'s and $\ell_2$'s text boxes:  If
$\ell_1$ and $\ell_2$ lie on different sides of $\Contour$, we
set~$c_v=0$. Otherwise, if the vertical distance~$d_v$ is less than $5$ pixels, we set
$c_v=M$ excluding these pairs. If $5\leq d_v< 30$ pixels, we set
$c_v=M/(100\cdot d_v)$ penalizing too small distances, and in all other
cases $c_v=0$.

\begin{figure}[t]
  \centering
 \includegraphics[]{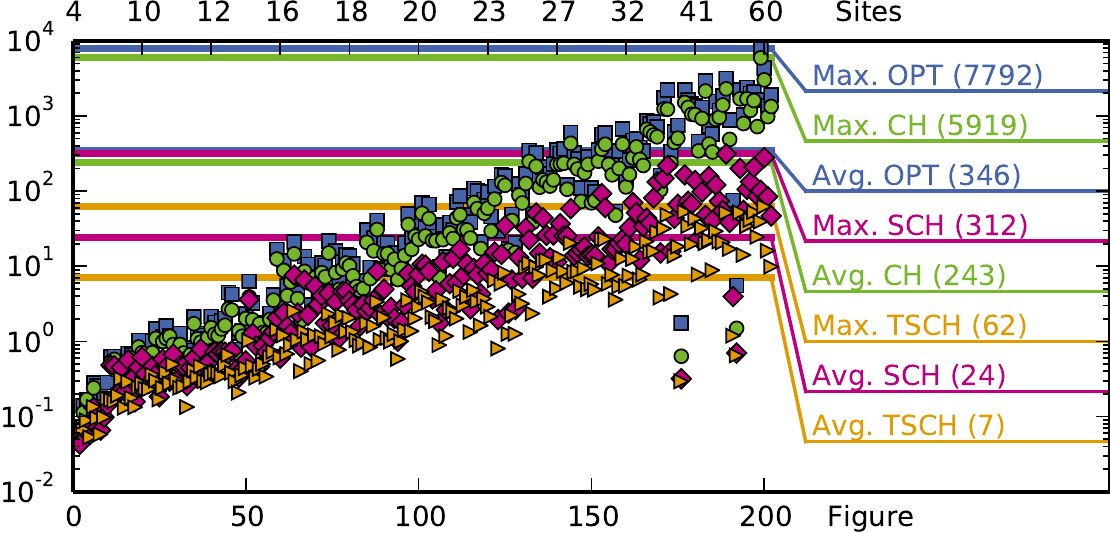}
 \caption{Running time in seconds (log.\ scale).  Each column represents one figure broken down into the labelings computed by the algorithm \OPT (blue rectangle), \CH (green disk), \SCH (pink diamond) or \TSCH (orange triangle). X-axis: The figures are sorted by their number of sites in increasing order. }
 \label{fig:running-time:pd10}
\end{figure}

\textbf{Quality.} To analyze the quality of the labelings constructed
by \CH, \SCH\ and \TSCH, we compare the \emph{cost ratio} $c(\mathcal
L_A)/c(\mathcal L_\OPT)$, where $\mathcal L_\OPT$ is created by $\OPT$
and $\mathcal L_A$ is created by the variant $A\in
\{\CH,\SCH,\TSCH\}$; see Fig.~\ref{fig:quality:pd10}. About $73 \%$
(\CH), $56\%$ (\SCH) and $53\%$ (\TSCH) of the labelings achieve
optimal costs. For $90\%$ of the figures, the algorithms achieve
labelings whose costs are at most a factor of $1.06$ (\CH), $1.75$
(\SCH) and $1.99$ (\TSCH) worse than the optimal costs.  Only for two
figures no valid solution could be computed, because their contour was
too small to host all labels; see Fig.~\ref{fig:bad-instances}. In
particular, the bottom sides of the contours are almost horizontal,
which means that too many labels must be placed along the left and
right side of the contour.   The designer of the original
  labeling avoided this problem by breaking design rule T5.

For the majority of the figures monotone labelings (criteria
L\ref{crit:monotone}) were created. Only for one figure none of the
algorithms could create a monotone labeling. For two further figures
$\SCH$ and $\TSCH$ could not create monotone labelings, while the
other two approaches did. Finally, for one figure only $\TSCH$ did not
create a monotone figure.

A consulted domain expert stated that the created labelings would be
highly useful as initial labeling for the remaining process of laying
out the figure. According to the expert, the labelings already
have high quality and would require only minor changes, due to
aesthetic reasons. These can be hardly expressed as general criteria,
but rely on the expertise of the designer. Altogether, the domain expert
assessed our algorithm to be a tool of great use that could reduce the
working load of a designer significantly.

\textbf{Running Time.} The average running times of our algorithms
range between $7$ seconds (\TSCH) and $346$ seconds (\OPT); see
Fig.~\ref{fig:quality:pd10}.  The variants \SCH and \TSCH are
remarkably faster than \OPT; see Fig.~\ref{fig:quality:pd10}. On
average they achieve a speedup by a factor of $8.4$ and $23.0$,
respectively. For some figures, \TSCH and \SCH even achieve a speed up
of $198$ and $76$. Further, \TSCH and \SCH never exceeded $62$ and
$312$ seconds, respectively. On average \TSCH is by a factor $2.5$ faster than \SCH;
in maximum by a factor of $7.1$.  The variant \CH only slightly
improves \OPT by a factor of $1.4$ on average and $3.7$ in maximum.

Since \OPT has a high memory consumption, we ran the experiments on a
server with 128 GB RAM. When such a system is not available, \SCH and
\TSCH are appropriate alternatives for \OPT, because they use
significantly less memory, are fast and mostly produce labelings of
high quality. To assess the applicability of the approaches in a
typical setting, we ran both \SCH and \TSCH on an ordinary laptop with
an Intel Core i7-3520M CPU clocked at $2.9$ GHz and $8$ GB of RAM. In
comparison to the previous setting, \SCH and \TSCH are slower by a
factor of $1.24$ and $1.22$ in maximum, respectively. On average \SCH
needs $28$ seconds and \TSCH needs $8$ seconds. Within $27$ ($94$)
minutes the labelings of all 202 figures were produced by \TSCH
(\SCH); in contrast, a domain expert stated that creating a labeling
for a figure with about $50$ sites by hand
 may easily take $30$ minutes.

\begin{figure}[t]
  
  \includegraphics[width=0.45\linewidth]{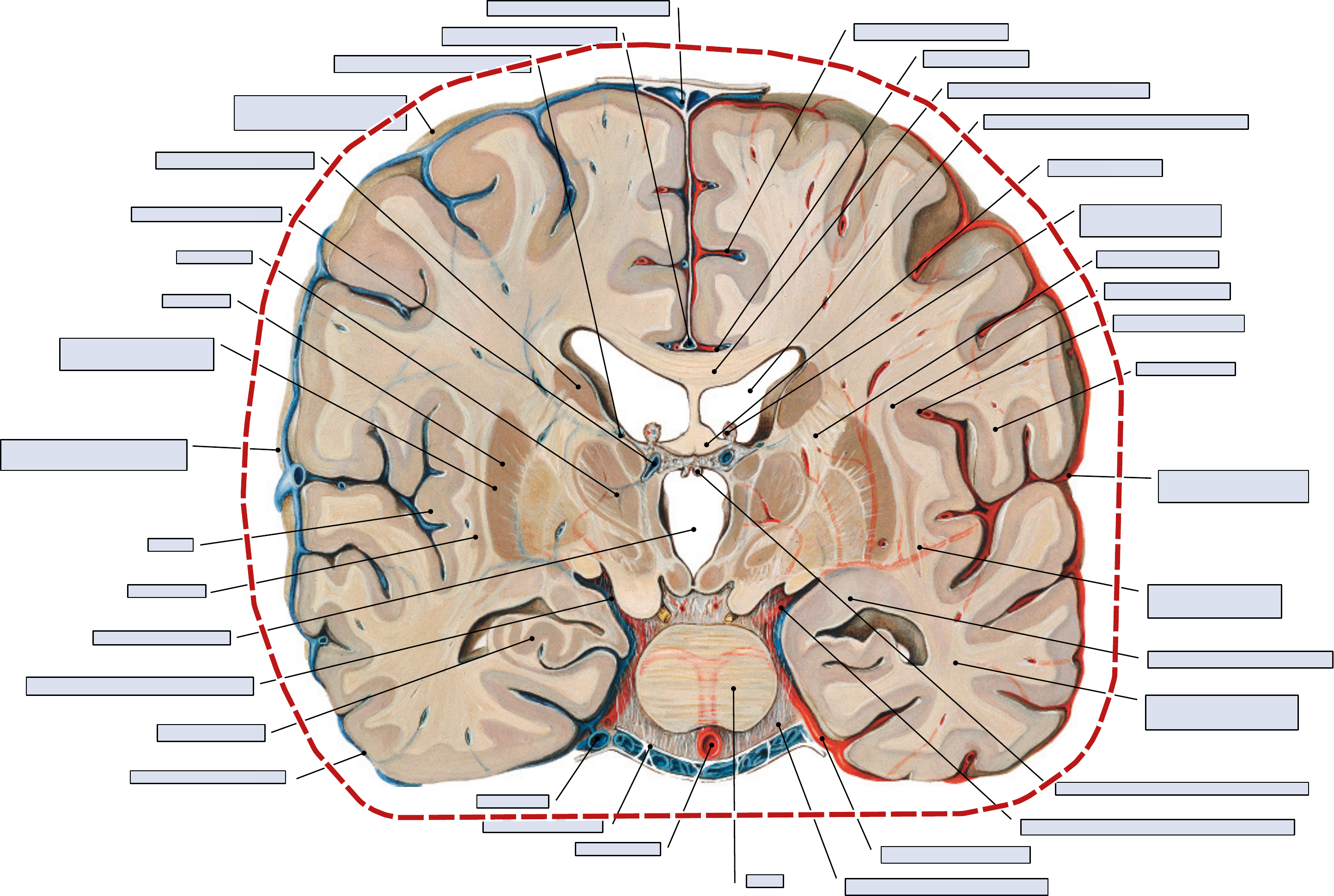}  
 \hfill 
 \includegraphics[width=0.45\linewidth]{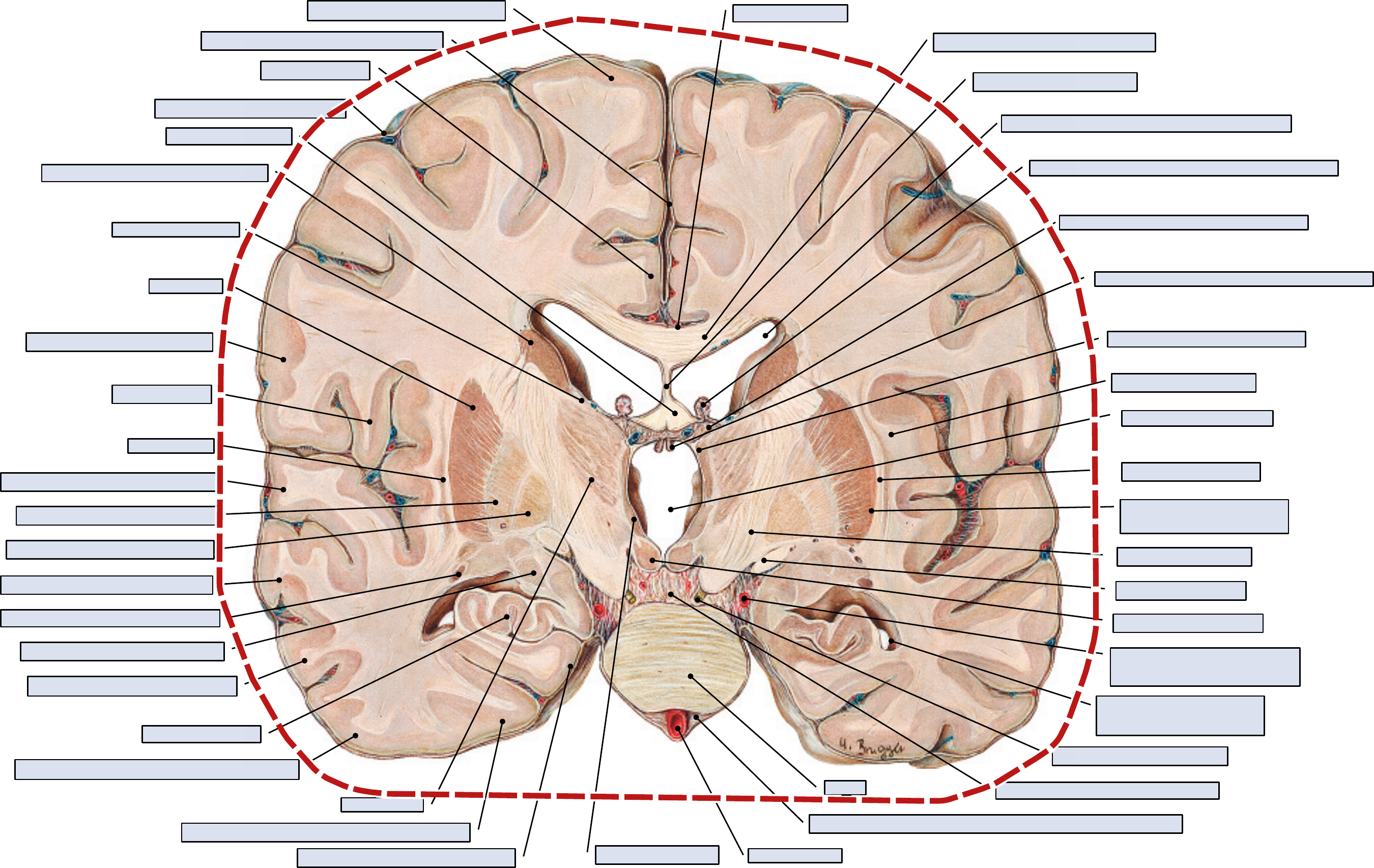}

 \caption{Illustrations with original labelings.  Due to the choice of
   the contour (red), both instances could not be solved by our
   approach. Source: Paulsen, Waschke, Sobotta Atlas Anatomie des
   Menschen, 23.Auflage 2010 \copyright Elsevier GmbH, Urban \&
   Fischer, München.}
  \label{fig:bad-instances}
\end{figure}

\section{Discussion}

The evaluation shows that our approach computes high-quality labelings
for the vast majority of instances in short time.  While \OPT and \CH
are mainly useful for evaluating our approach, \SCH and \TSCH are fast
enough to be deployed in practice. If quality is more important than
running time, \SCH is preferable and otherwise \TSCH%
Figure~\ref{fig:example:approachesB} shows an example where \SCH
yields a better result than \TSCH, still in reasonable time.

Only for two illustrations we could not create any labelings due to
the chosen contour. To avoid this problem one could incorporate more
sophisticated procedures for creating the contour.  However, we
refrained from this, because we were mainly interested in evaluating
the performance of our dynamic programming approach. Alternatively,
the designer could appropriately adapt the contour in an interaction
step.

Further, one can relax the convexity assumption in practice
allowing more general contours. The cases when overlaps of text boxes
may occur for non-convex contours are pathological.  This further
enhances the possibility to integrate interaction with the user, who
could adapt the contour on demand.  In most cases the running times of
\TSCH are sufficient for interactive editing, which is necessary for
laying out professional books; for real-time scenarios (e.g., labeling
images of ongoing surgeries) the performance of our approach needs 
 further improvement. Identifying more rules for excluding unnecessary
instances is one possibility to speed up the procedure.

Although the produced labelings already have high quality, we can
improve on them by applying post-processing steps, e.g., fix the order
of the labels and restart the dynamic programming approach on a larger
set of ports to do fine-tuning on the label placement.  More
sophisticated post-processing steps are future work.

Similarly, for labeling the same image in different scales, one could,
based on a large-scale master labeling, penalize changes in the radial
ordering of the labels in other scales. Pre-computing the labelings
for all given scales in that way, one minimizes layout changes between
different zoom levels, which can be useful for interactive views. A
detailed evaluation %
is also future work.

Finally, our approach supports a one-to-one correspondence between
labels and sites. To also support multiple sites per label, one could
create multiple leader candidates having different fork
points. Excluding them mutually, this yields a simple adaption of our
original approach. The main research questions is then to identify
appropriate candidate positions for the fork points.

\section{Conclusion}
In this paper we presented a flexible model for contour
labeling, which we validated through interviews with domain
experts and a semi-automatic analysis on handmade labelings.  With
some engineering the developed dynamic programming approach can be
used to generate labelings of high quality in short time. The
presented approach is particularly interesting for creating labelings
of large collections of figures that must follow the same design
rules; a prominent example are figures in atlases of human anatomy.

In contrast to external labeling heuristics
(e.g.,~\cite{Hartmann2004,Ali2005}), our full approach provides
mathematically optimal, exact solutions. Moreover, we can use these
optimal solutions to assess the performance of our faster heuristics
quantitatively in terms of the cost ratio. Compared to other exact
labeling algorithms (e.g.~\cite{Bekos2007,Benkert2009}), our
optimization model supports more general contours, non-uniform label
shapes, and the flexible integration of additional soft and hard
constraints. %

\end{document}